\documentclass[pra,twocolumn,a4paper,nofootinbib,superscriptaddress]{revtex4-2}

\usepackage{hyperref}
\usepackage{graphicx}
\usepackage{amsmath}
\usepackage{amssymb}
\usepackage{color}
\usepackage{amsthm}
\usepackage{amsfonts}
\usepackage{subfigure}
\usepackage{enumerate}
\usepackage{enumitem}
\usepackage{tikz}
\usepackage{url}
\usetikzlibrary{arrows,snakes}
\newtheorem{theorem}{Theorem}
\newtheorem{lemma}{Lemma}

\newtheorem{corollary}{Corollary}

\begin{document}

\title{Universal Fault-Tolerant Quantum Computing with Stabiliser Codes}
\author{Paul Webster}
\affiliation{Centre for Engineered Quantum Systems, School of Physics, The University of Sydney, Sydney, NSW 2006, Australia}
\author{Michael Vasmer}
\affiliation{Perimeter Institute for Theoretical Physics, Waterloo, ON N2L 2Y5, Canada}
\affiliation{Institute for Quantum Computing, University of Waterloo, Waterloo, ON N2L 3G1, Canada}
\author{Thomas R.~Scruby}
\affiliation{Department of Physics and Astronomy, University College London, London WC1E 6BT, UK}
\author{Stephen D.~Bartlett}
\affiliation{Centre for Engineered Quantum Systems, School of Physics, The University of Sydney, Sydney, NSW 2006, Australia}

\date{May 19, 2021}

\begin{abstract}
The quantum logic gates used in the design of a quantum computer should be both \emph{universal}, meaning arbitrary quantum computations can be performed, and \emph{fault-tolerant}, meaning the gates keep errors from cascading out of control.  A number of no-go theorems constrain the ways in which a set of fault-tolerant logic gates can be universal.  These theorems are very restrictive, and conventional wisdom holds that a universal fault-tolerant logic gate set cannot be implemented natively, requiring us to use costly distillation procedures for quantum computation.  Here, we present a general framework for universal fault-tolerant logic with stabiliser codes, together with a no-go theorem that reveals the very broad conditions constraining such gate sets.  Our theorem applies to a wide range of stabiliser code families, including concatenated codes and conventional topological stabiliser codes such as the surface code.  The broad applicability of our no-go theorem provides a new perspective on how the constraints on universal fault-tolerant gate sets can be overcome. In particular, we show how non-unitary implementations of logic gates provide a general approach to circumvent the no-go theorem, and we present a rich landscape of constructions for logic gate sets that are both universal and fault-tolerant.  That is, rather than restricting what is possible, our no-go theorem provides a signpost to guide us to new, efficient architectures for fault-tolerant quantum computing.
\end{abstract}

\maketitle
\section{Introduction}
Quantum computing offers the promise of efficiently solving problems that are intractable for conventional computers. Delivering on this promise is a challenge, because quantum systems are especially vulnerable to noise, and a quantum advantage can easily be destroyed by decoherence. One approach to this challenge is to use quantum error-correcting codes to construct near-noiseless logical qubits using redundancy in many physical qubits~\cite{Aharanov,Aliferis}. Furthermore, to process the encoded quantum information, we require quantum logic gates that are \emph{fault-tolerant}, meaning that they do not compromise the protection of the code by significantly increasing the probability of logical errors. Realising fault-tolerant implementations of a quantum-computationally universal set of logical operators on a quantum error-correcting code family is essential for scalable quantum computing. 

Constructing a quantum computing architecture with a universal, fault-tolerant gate set has been a longstanding challenge for the field, because the existence of such a gate set for a given quantum code is highly constrained by several no-go theorems \cite{Eastin,Zeng,Bravyi,Pastawski, WebsterLPLO,Beverland,WebsterBraid,JochymOConnor,Burton,Cree}.  The most significant of these is the Eastin-Knill theorem~\cite{Eastin}, which proves that the simplest approach to fault-tolerance -- transversal gates -- cannot yield a universal gate set for any quantum code. For particular code families, more general no-go theorems have subsequently been proven. For example, for topological stabiliser codes, implementing gates by locality-preserving logical operators~\cite{Bravyi} and braiding defects~\cite{WebsterBraid} is insufficient for universality.  The conventional wisdom is that a quantum error correcting code cannot natively possess a universal, fault-tolerant set of logic gates.

Notwithstanding these no-go results, a large and apparently diverse range of approaches have been proposed to implement universal, fault-tolerant quantum logic~\cite{Bravyi2,Bombin3,Bombin,Vasmer,Brown,WebsterBraid,JochymOConnor2,Paetznik,Anderson,Knill2,Yoder2,Yoder,Brown2,Bombin4,JochymOConnor3,Bravyi3,Bombin5,Hwang,Colladay,Campbell}.  The most well-known amongst these approaches is magic state distillation \cite{Bravyi2}, which lowers the logical error rate of a particular non-fault-tolerant gate by brute force, and as such is expensive in terms of resource cost.  With distillation and more broadly, these approaches all make use of some extra technique that lies outside the scope of the no-go theorems.  The theory of quantum computing architectures is left in an undesirable situation, where systematic approaches to designing fault-tolerant quantum logic are highly constrained by no-go theorems, and the `tricks' used to circumvent these constraints are \emph{ad hoc} and lack unifying principles.  The result is a spectrum of potential quantum computing architectures with a mix of efficient and costly elements that are difficult to compare with each other.  

In this paper, we resolve this situation for a broad class of stabiliser code families, by identifying the structures in the sets of logical operators that lead to such no-go theorems as well as the avenues for circumventing the constraints of these theorems.  We present two main results. First (in Sec.~\ref{Sec:III}), we prove a highly general no-go result precluding a universal set of logical operators by quantum channels that are \emph{local noise preserving}, meaning that they preserve \emph{locality} of stochastic noise (in the sense of exponential suppression of high-weight errors, regardless of geometric locality \cite{Gottesman4}). This is significant since local noise preserving channels are those that are naturally fault-tolerant for local stochastic noise \cite{Aharanov,Aliferis,Gottesman4}, generalising standard approaches to fault tolerance such as transversal gates \cite{Eastin} and (geometrically) locality-preserving logical operators \cite{Bravyi}. This theorem applies to a wide range of stabiliser codes that includes concatenated stabiliser codes and topological stabiliser codes with conventional topological order as important subclasses, and can be understood as a substantial generalisation of the Eastin-Knill theorem \cite{Eastin} for these codes. Second (in Sec.~\ref{Sec:IV}), we show how approaches to circumventing the no-go result to realise fault-tolerant implementations of a universal set of logical operators follow naturally from the logic of this theorem. Specifically, we generalise the notion of local noise preserving channels, while retaining fault tolerance, by constraining their action only on correctable errors but allowing any action on uncorrectable errors. This generalisation is trivial for unitary operators, but allows for universal fault tolerance by non-unitary channels that treat logical operators differently from correctable errors. Specifically, we describe two classes of such channels; \emph{gradational} channels, which include code deformation and pieceable fault tolerance, and \emph{intrinsically fault-tolerant} channels which include state injection and code switching. The apparent diversity of approaches to implementing a universal set of logical operators fault-tolerantly thus emerges naturally from our analysis.

\section{Restrictions on Sets of Fault-Tolerant Logical Gates}\label{Sec:III}
{ In this section, we prove our main no-go result. Specifically, we show that a highly general class of fault-tolerant quantum channels are not sufficient for implementing a universal set of logical operators on a large class of stabiliser codes. In Sec.~\ref{Sec:IIA}} we introduce the notion of fault-tolerant quantum channels and argue that the class of \emph{local noise preserving} channels constitutes a natural class of fault-tolerant quantum channels under the assumption of local stochastic noise. In Sec.~\ref{Sec:IIB}, we then prove our no-go result, showing that universal quantum computing using only local noise preserving channels is not possible on a large class of stabiliser codes including conventional topological stabiliser codes and concatenated stabiliser codes. This result clarifies that universal, fault-tolerant quantum computing with such codes necessitates generalising fault tolerance beyond local noise preserving channels, which we consider in Sec.~\ref{Sec:IV}.

\subsection{Fault-Tolerant Quantum Channels}\label{Sec:IIA}
\subsubsection{Quantum Channels}
Operations on quantum error-correcting codes can be described by quantum channels acting on the physical qubits. A quantum channel is a completely-positive trace-preserving (CPTP) map on the state of the qubits. Such a channel admits a dual map, $\mathcal{A}$, that acts on operators on the physical qubits and corresponds to the Heisenberg picture \cite{Attal}. We use this dual map throughout, referring to it as the quantum channel for brevity. We refer to a quantum channel as \emph{unitary} if its action is equivalent to some unitary operator $U$ (i.e.~$\mathcal{U}(B)=UBU^\dag$), but we emphasise that general quantum channels are not unitary. 

\subsubsection{Noise}
We model the noise afflicting a quantum error-correcting code by a quantum channel, $\mathcal{E}_\eta$, that is applied before each timestep of computation. This noise is parameterised by a noise rate $\eta$, independent of the code, which we often leave implicit. In general, this noise can be time-dependent, with a different noise channel, $\mathcal{E}_\eta^{(t)}$ applied at each timestep $t$.

Scalable protection against noise can be achieved with a code family -- a sequence of error-correcting codes indexed by a positive integer $l$. We require that this code family has a threshold, i.e.,~that there exists a non-zero noise rate below which the effect of noise on the logical qubits after error correction can be arbitrarily suppressed by choosing sufficiently large $l$. We quantify  the effect of noise on the logical qubits using the \emph{distance} between the noise channel followed by application of a decoder that performs error-correction, $\mathcal{D}$, i.e.~$\mathcal{D}\circ\mathcal{E}$, and the identity channel, $\mathcal{I}$. We follow Ref.~\cite{Aharanov} by using the diamond norm as as our measure of distance, and restrict to the logical space, since we only require that error correction succeeds if the original state is a logical state. We denote this distance by $\| \cdot \|_{\diamond,\mathcal{L}}$. A code family (with decoder, $\mathcal{D}$) then has a threshold, $\eta_{\mathcal{I}}>0$, if
\begin{equation}\label{Threshold}
  \lim_{l\to\infty} \left\|\mathcal{D}\circ\mathcal{E}_\eta-\mathcal{I}\right\|_{\diamond,\mathcal{L}}=0\,, \quad \text{for}\ \eta<\eta_{\mathcal{I}}\,.
\end{equation}
In the case of time-dependent noise, we require that $\mathcal{E}^{(t)}$ admits a threshold for all $t$.

\subsubsection{Fault Tolerance}
We now consider fault tolerance of quantum logic gates, expressed as quantum channels. Precisely, a \emph{logical operator implementation} is a quantum channel that acts as a logical operator on the logical space of a code. Specifically, an implementation of logical operator $\bar{A}$ is a quantum channel such that, for any logical operator $\bar{U}$, if $U$ is a representative of $\bar{U}$, then $\mathcal{A}(U)$ is a representative of logical operator $\bar{A}\bar{U}\bar{A}^\dag$. A logical operator implementation can have any action on non-logical operators. 

In the presence of noise, attempting to apply a quantum channel $\mathcal{A}$ will instead result in a noisy implementation $\mathcal{A}_{\mathcal{E}_\eta}$. If the time over which $\mathcal{A}$ is applied is constant in the code size then we can assume that $\mathcal{A}_{\mathcal{E}_\eta}=\mathcal{A} \circ \mathcal{E}_\eta$; more generally, $\mathcal{A}_{\mathcal{E}_\eta}$ corresponds to $\mathcal{A}$ along with the application of $\mathcal{E}^{(t)}$ before each timestep, $t$, of its implementation.

We define $\mathcal{A}$ to be \emph{fault-tolerant} if $\mathcal{A}_{\mathcal{E}_\eta}$ followed by decoding approximates $\mathcal{A}$ on the logical space to arbitrarily high precision. Precisely, we define $\mathcal{A}$ to be fault-tolerant (with decoder, $\mathcal{D}$) if there exists a threshold $\eta_{\mathcal{A}}>0$ such that
\begin{equation}\label{EqFT2}
  \lim_{l\to\infty} \left\|\mathcal{D}\circ\mathcal{A}_{\mathcal{E}_{\eta}}-\mathcal{A}\right\|_{\diamond,\mathcal{L}}=0\,, \quad \text{for}\ \eta<\eta_{\mathcal{A}}\,.
\end{equation}

Thus, to directly verify that a set of quantum channels, $\{\mathcal{A}\}$, can be used to implement fault-tolerant quantum computing within a given architecture, we must first specify a noise model, $\mathcal{E}$, and decoder, $\mathcal{D}$, and then show that Eq.~(\ref{EqFT2}) is satisfied for each $\mathcal{A}$. Such a direct approach to constructing fault-tolerant gate sets can be taken (for example, see Ref.~\cite{JochymOConnor2} or Appendix \ref{AppendixE}), but in general a more flexible approach is preferred whereby fault tolerance of a class of quantum channels is ensured for a wide range of realistic noise models and decoders. We now introduce such a naturally fault-tolerant class of channels; those that are local noise preserving.

\subsubsection{Local Noise Preserving Channels}
A standard class of noise models considered in the study of fault tolerance is those that are \emph{local} and \emph{stochastic} \cite{Aliferis,Gottesman4}. We define a stochastic noise model by specifying a set of unitary errors, $\mathbf{E}$, and an associated probability distribution, $\{p_i\}$, such that $\mathcal{E}$ acts by the application of an error $E_i\in\mathbf{E}$ chosen with probability $p_i$, independent of the current state of the code (i.e.~of past errors). A stochastic noise model is \emph{local} if the probability that any given set of physical qubits, $\textbf{A}$, is contained in the support of an error decays exponentially with the size of $\textbf{A}$, i.e.~ $p\left(\{E_i|\textbf{A}\subseteq \text{supp}(E_i)\}\right) \leq  \eta^{|\textbf{A}|}$ where $\eta<1$ is the noise rate \cite{Gottesman4}. We emphasise that this does not imply any assumptions on the \emph{geometric} locality of noise.

The action of a quantum channel, $\mathcal{A}$ can be understood to map a stochastic noise model, $\mathcal{E}$ to a different, \emph{effective} stochastic noise model, $\mathcal{E}_{\mathcal{A}}$. Informally, a quantum channel should be fault-tolerant if this mapping preserves properties of noise channels that make them amenable to error correction. Precisely, for quantum channel $\mathcal{A}$ and stochastic noise channel $\mathcal{E}$, we can write $\mathcal{A}_{\mathcal{E}_\eta}= \mathcal{E}_{\mathcal{A}_\eta} \circ \mathcal{A}$ where $\mathcal{E}_{\mathcal{A}}$ is some stochastic noise model. Indeed, for constant-time $\mathcal{A}$, $\mathcal{E}_{\mathcal{A}_\eta}$ corresponds to the stochastic noise model with probability distribution $p_{\mathcal{A}}(E)=\sum_{E'\in \mathcal{A}^{-1}(E)}p(E')$; more generally $p_{\mathcal{A}}(E)$  is the sum over the probabilities of all sets of errors occuring over the application of $\mathcal{A}$ that result in the final error, $E$. Thus, we can consider the effect of applying $\mathcal{A}_{\mathcal{E}_\eta}$ to be equivalent to applying the ideal channel $\mathcal{A}$ followed by an effective noise channel $\mathcal{E}_{\mathcal{A}_\eta}$. 

A logical operator implementation, $\mathcal{A}$, is thus fault-tolerant if the noise model $\mathcal{E}_{\mathcal{A}_\eta}$ necessarily admits a threshold provided that $\mathcal{E}^{(t)}$ admits a threshold for all $t$. Indeed, Eq.~(\ref{EqFT2}) is equivalent to
\begin{equation}\label{EqFT3}
  \lim_{l\to\infty} \left\|\left(\mathcal{D}\circ\mathcal{E}_{\mathcal{A}_{\eta}}-\mathcal{I}\right)\circ \mathcal{A}\right\|_{\diamond,\mathcal{L}}=0\,, \quad \text{for}\ \eta<\eta_{\mathcal{A}}\,.
\end{equation}
As a logical operator implementation, the action of $\mathcal{A}$ on the logical space is unitary and, hence, invertible. Thus,  from Eq.~(\ref{EqFT3}), we have that fault tolerance of $\mathcal{A}$ is equivalent to the condition that there is a threshold for noise model $\mathcal{E}_{\mathcal{A}}$, i.e.
\begin{equation}\label{EqFT4}
  \lim_{l\to\infty} \left\|\mathcal{D}\circ\mathcal{E}_{\mathcal{A}_{\eta}}-\mathcal{I}\right\|_{\diamond,\mathcal{L}}=0\,, \quad \text{for}\ \eta<\eta_{\mathcal{A}}\,.
\end{equation}

Theorems exist showing the existence of a threshold for local stochastic noise for a wide range of code families, including concatenated codes \cite{Aharanov} and low-density parity check (LDPC) codes (e.g.~topological stabiliser codes) \cite{Kovalev,Gottesman4}, with appropriate decoders. Provided that $\mathcal{E}_{\mathcal{A}}$ is a local stochastic noise model, these theorems imply,  via Eq.~(\ref{EqFT4}), that quantum channel $\mathcal{A}$ is fault-tolerant for such decoders.

Motivated by this observation, we introduce the class of \emph{local noise preserving (LNP) quantum channels}, as quantum channels that map local stochastic noise models to local stochastic noise models. Precisely, we define quantum channel $\mathcal{A}$ to be LNP if, for any set of local stochastic noise models $\{\mathcal{E}^{(t)}\}$, the corresponding $\mathcal{E}_{\mathcal{A}}$ is a local stochastic noise model. LNP quantum channels on a code family thereby serve as a set of quantum channels that are manifestly fault-tolerant for any local stochastic noise model, and any decoder that gives a threshold for local stochastic noise. They can be understood as a generalisation of important classes of fault-tolerant gates, including transversal gates \cite{Eastin} and (geometrically) locality-preserving logical operators \cite{Bravyi} (as shown in Appendix \ref{AppendixC}), which are based on the principle of preserving locality of errors. By contrast, as shown in the following subsection, quantum channels that spread errors by a factor unbounded in the code size are not LNP.

\subsection{No Go Theorem}\label{Sec:IIB}

\subsubsection{Spread of Errors}
The fault tolerance of quantum channels is often associated with the extent to which they spread errors. We justify this association by showing that LNP quantum channels indeed can only spread a Pauli error by a factor at most constant in the code size.

Specifically, we define the \emph{spread} of a quantum channel as a measure of the extent to which it can increase the size of the support of errors. More precisely, the spread of a Pauli operator $P\in\mathbf{P}_n$ {under the action of a quantum channel} $\mathcal{A}$ is
\begin{equation}
s_{\mathcal{A}}\left(P\right)= \frac{|\text{supp}\left(\mathcal{A}(P)\right)|}{|\text{supp}\left(P\right)|}
\end{equation}
We define the spread of $\mathcal{A}$ to be the maximum spread of any Pauli operator under $\mathcal{A}$:
\begin{equation}
s_{\mathcal{A}}=\max_{P\in\mathbf{P}_n} s_{\mathcal{A}}(P)
\end{equation}
We say $\mathcal{A}$ has bounded spread if there exists a constant $C$ (independent of $l$) such that $s_{\mathcal{A}}\leq C$ for all $l\in\mathbb{Z}_+$.

Bounded-spread quantum channels do not significantly increase the size of the support of errors. Many of the standard approaches to fault tolerance make use of bounded-spread quantum channels (see Appendix~\ref{AppendixC} for examples). Indeed, transversal gates have spread that is bounded by the number of code blocks they act on, and so have bounded spread (assuming they act on a fixed number of code blocks). More generally, the spread of a quantum circuit consisting of local unitary gates scales with its depth, and so constant-depth quantum circuits also have bounded spread. Locality-preserving logical operators in topological codes can only grow the support of errors within a constant-size neighbourhood around the original support \cite{Bravyi,Beverland,WebsterLPLO}, and so also have bounded spread. Finally, braiding defects in topological stabiliser codes is also bounded-spread, because such braiding operations can only increase the weight of logical operators (or uncorrectable errors) by a constant factor (as shown in Appendix \ref{AppendixC3}) and correctable errors are corrected before they can spread beyond a local region.

The following theorem relates LNP quantum channels to bounded-spread quantum channels.
\begin{theorem}\label{Th1}
An LNP quantum channel necessarily has bounded spread.
\end{theorem}
\begin{proof}
Let $\mathcal{A}$ be a quantum channel. By definition of the spread of $\mathcal{A}$, there exists $P\in\mathbf{P}_n$ such that $s_{\mathcal{A}}= \frac{|\text{supp}\left(\mathcal{A}(P)\right)|}{|\text{supp}\left(P\right)|}$. We define a stochastic noise model, $\mathcal{E}_\eta$, that acts as $P$ with probability $p(P)=\eta^{|\text{supp}(P)|}$ or as the identity with probability $p(I)=1-\eta^{|\text{supp}(P)|}$. This is a local stochastic noise model, since it satisfies $p(P)\leq \eta^{|\text{supp}(P)|}$. We consider a time-dependent local stochastic noise model that acts as $\mathcal{E}_\eta$ at the first timestep, and trivially at all subsequent timesteps. 

With this setup, we show that the effective noise model $\mathcal{E}_{\mathcal{A}_\eta}$ is local only if $\mathcal{A}$ has bounded spread, which implies the theorem. Indeed, with noise model $\mathcal{E}_{\mathcal{A}}$ the error $\mathcal{A}(P)$ occurs with probability $p(P)$, since it is present after the implementation of $\mathcal{A}$ iff the error $P$ occurs at the first timestep. Thus, $\mathcal{E}_{\mathcal{A}}$ is a local stochastic noise model only if there exists a constant effective noise rate $\eta_{\mathcal{A}}\in(0,1)$ (independent of $l$) such that 
\begin{equation}
p(P)\equiv \eta^{|\text{supp}(P)|} \leq \eta_{\mathcal{A}}^{|\text{supp}\left(\mathcal{A}(P)\right)|} \,\,\,\,\forall l\in\mathbb{Z}_+
\end{equation}
which implies that
\begin{equation}
s_{\mathcal{A}}\equiv \frac{|\text{supp}\left(\mathcal{A}(P)\right)|}{|\text{supp}\left(P\right)|} \leq \frac{\log(\eta)}{\log\left(\eta_{\mathcal{A}}\right)}\,\,\,\,\,\,\forall l\in\mathbb{Z}_+ \label{EqProof}
\end{equation}
Since, $\eta$ and $\eta_{\mathcal{A}}$ are independent of $l$, it follows from Eq.~(\ref{EqProof}) that $\mathcal{A}$ indeed necessarily has bounded spread.

\end{proof}

\subsubsection{No-Go for Bounded-Spread Quantum Channels}\label{Sec:IIIB}
We here present a no-go theorem that rules out the existence of a universal set of logical operators that admit bounded-spread implementations, that is applicable to a broad class of stabiliser code families.  This class of stabiliser codes includes important code families such as conventional topological stabiliser codes and concatenated stabiliser codes. Our result is important in its own terms, but also takes on particular significance in the context of Theorem \ref{Th1}, because it thereby rules out a universal set of logical operators admitting LNP implementations.

The no-go theorem stems from the observation that bounded-spread logical operator implementations cannot significantly increase the size of the support of logical Pauli operators of stabiliser codes, and this constrains the set of such logic gates. Specifically, let $d_{\bar{L}}$ denote the distance of logical operator $\bar{L}$, i.e.,~the minimum size of support of a representative of $\bar{L}$. If $\mathcal{A}$ is an implementation of logical operator $\bar{A}$ then it must map representatives of logical operator $\bar{L}$ to representatives of $\bar{A}\bar{L}\bar{A}^\dag$.  Because representatives of logical Pauli operators of stabiliser codes are physical Pauli operators \cite{Gottesman2}, this implies that
\begin{equation}
s_{\mathcal{A}}\geq \frac{d_{\bar{A}\bar{P}\bar{A}^\dag}}{d_{\bar{P}}}
\end{equation}
for all logical Pauli operators, $\bar{P}$, where $s_{\mathcal{A}}$ is the spread of $\mathcal{A}$.
Thus, $\bar{A}$ can only admit a bounded-spread implementation if 
\begin{equation}\label{Eq13}
\frac{d_{\bar{A}\bar{P}\bar{A}^\dag}}{d_{\bar{P}}}\leq C
\end{equation}
for some constant $C$ independent of $l$ for all logical Pauli operators, $\bar{P}$. Thus, if there is no universal set of logical operators, $\{\bar{A}\}$, that satisfy Eq.~(\ref{Eq13}) on a given code family, then the code family cannot admit a universal set of bounded-spread logical operator implementations.

To formalise this argument, we define the subset of logical operators, $\mathbf{L}_\downarrow \subseteq \mathbf{L}$, of distance at most a constant factor larger than the code distance $d$, i.e.,
\begin{equation}
\mathbf{L}_\downarrow = \left\{\bar{L}\in\mathbf{L}\middle|\exists C\in\mathbb{R}: \frac{d_{\bar{L}}}{d}\leq C \text{ for all }l\in\mathbb{Z}_+\right\}\,.
\end{equation}
We also define $\mathbf{P}_\downarrow=\mathbf{L}_\downarrow \cap \mathbf{P}_n$ to be the subset of logical Pauli operators in $\mathbf{L}_\downarrow$. The following lemma shows that a logical operator that admits a bounded-spread implementation necessarily maps elements of $\mathbf{P}_\downarrow$ to elements of $\mathbf{L}_\downarrow$.
\begin{lemma}\label{LemmaPL}
Let $\mathcal{A}$ be a bounded-spread implementation of logical operator $\bar{A}$. Then
\begin{equation}
\bar{P}\in\mathbf{P}_\downarrow\implies \bar{A}\bar{P}\bar{A}^\dag \in \mathbf{L}_\downarrow \,.
\end{equation}
\end{lemma}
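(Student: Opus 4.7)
The plan is to chain three inequalities that the hypotheses of the lemma directly yield, producing an upper bound on $d_{\bar{A}\bar{P}\bar{A}^\dag}/d$ that is constant in $l$.

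First, fix $\bar{P}\in\mathbf{P}_\downarrow$ and let $P$ be a \emph{minimum-weight Pauli representative} of $\bar{P}$, so that $\text{wt}(P)=d_{\bar{P}}$. By the definition of a logical operator implementation, $\mathcal{A}(P)$ is a representative of the logical operator $\bar{A}\bar{P}\bar{A}^\dag$, hence
\begin{equation}
d_{\bar{A}\bar{P}\bar{A}^\dag}\;\leq\;\text{wt}\bigl(\mathcal{A}(P)\bigr)\,,
\end{equation}
since the distance is the minimum weight over all representatives. This is the step that connects the abstract channel $\mathcal{A}$ to a concrete weight one can bound.

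Next, I would invoke the definition of spread: $\text{wt}(\mathcal{A}(P))/\text{wt}(P)\leq s_{\mathcal{A}}$, so $\text{wt}(\mathcal{A}(P))\leq s_{\mathcal{A}}\, d_{\bar{P}}$. Because $\mathcal{A}$ has bounded spread, there is a constant $C_1$, independent of $l$, with $s_{\mathcal{A}}\leq C_1$ for all $l$. Because $\bar{P}\in\mathbf{P}_\downarrow$, there is a constant $C_2$, independent of $l$, with $d_{\bar{P}}\leq C_2\, d$ for all $l$. Chaining these bounds gives
\begin{equation}
\frac{d_{\bar{A}\bar{P}\bar{A}^\dag}}{d}\;\leq\;\frac{s_{\mathcal{A}}\,d_{\bar{P}}}{d}\;\leq\;C_1 C_2\,,
\end{equation}
for all $l\in\mathbb{N}$, so $\bar{A}\bar{P}\bar{A}^\dag\in\mathbf{L}_\downarrow$ with the witnessing constant $C_1 C_2$.

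No step is conceptually difficult; the argument is essentially the formalisation of Eq.~(\ref{Eq13}) from the discussion preceding the lemma. The only point requiring mild care is the first inequality: one must remember that $d_{\bar{L}}$ is defined as the minimum weight over \emph{all} representatives of $\bar{L}$, so $\mathcal{A}(P)$ being \emph{some} representative only gives an upper bound on $d_{\bar{A}\bar{P}\bar{A}^\dag}$, not an equality. This is exactly what is needed, and the remainder is an elementary two-step chain. If anything counts as an obstacle, it would be checking that $\text{wt}(\mathcal{A}(P))$ is well-defined for a non-unitary channel — it is, being the size of the support of the operator-valued image $\sum_i A_i P A_i^\dag$, consistent with the weight convention introduced in Sec.~\ref{Sec:IIIA}.
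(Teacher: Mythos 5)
Your proposal is correct and follows essentially the same route as the paper: the paper's proof likewise combines the inequality $s_{\mathcal{A}}\geq d_{\bar{A}\bar{P}\bar{A}^\dag}/d_{\bar{P}}$ (which it derives just before the lemma from the fact that a minimum-weight Pauli representative of $\bar{P}$ is mapped to a representative of $\bar{A}\bar{P}\bar{A}^\dag$) with the bounds $s_{\mathcal{A}}\leq C_1$ and $d_{\bar{P}}/d\leq C_2$ to conclude $d_{\bar{A}\bar{P}\bar{A}^\dag}/d$ is bounded. Your version merely makes the derivation of that first inequality explicit, which is a fair expansion rather than a different argument.
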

\begin{proof}
If $\bar{P} \in \mathbf{P}_\downarrow$ then there exists a constant $C$ such that $d_{\bar{P}}/d\leq C$. Thus,
\begin{equation}
s_{\mathcal{A}}\geq \frac{d_{\bar{A}\bar{P}\bar{A}^\dag}}{d_{\bar{P}}} = \frac{d_{\bar{A}\bar{P}\bar{A}^\dag}/d}{d_{\bar{P}}/d}\geq C^{-1} \frac{d_{\bar{A}\bar{P}\bar{A}^\dag}}{d} \,.
\end{equation}
Thus, $\mathcal{A}$ can have bounded spread only if $d_{\bar{A}\bar{P}\bar{A}^\dag}/d$ is bounded, which implies that $\bar{A}\bar{P}\bar{A}^\dag\in\mathbf{L}_\downarrow$.
\end{proof}

We define a code family to be $\mathbf{B}$-\emph{constrained} if the group 
\begin{equation}
\mathbf{B}=\left\langle\left\{\bar{B}\in\mathbf{L}\middle| \bar{B}\bar{P}\bar{B}^\dag \in \mathbf{L}_\downarrow \,\, \forall \bar{P}\in\mathbf{P}_\downarrow \right\}\right\rangle
\end{equation}
is not universal. The following no-go theorem then follows.
\begin{theorem}\label{Th2}
A $\mathbf{B}$-constrained code family cannot admit bounded-spread implementations of a universal set of logical operators.
\end{theorem}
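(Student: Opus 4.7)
The plan is to argue by contraposition: assume a universal set of bounded-spread logical operator implementations exists, and derive that the group $\mathbf{B}$ must then be universal, contradicting the $\mathbf{B}$-constrained hypothesis. The proof is essentially a packaging argument, because Lemma \ref{LemmaPL} already delivers the main technical content; the task here is to show that the set of logical operators admitting a bounded-spread implementation is contained in the generating set of $\mathbf{B}$ and to note that the universality property passes through composition.

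First I would record the elementary observation that bounded spread is closed under composition of channels: if $\mathcal{A}_1$ and $\mathcal{A}_2$ have spreads bounded by constants $C_1$ and $C_2$, then for any Pauli $E$ we have $\mathrm{wt}(\mathcal{A}_1\circ\mathcal{A}_2(E)) \leq C_1 \cdot \mathrm{wt}(\mathcal{A}_2(E)) \leq C_1 C_2 \cdot \mathrm{wt}(E)$, so the composition has spread bounded by $C_1 C_2$, which remains independent of $l$. Consequently, the collection of logical operators admitting at least one bounded-spread implementation, call it $\mathbf{A}$, is closed under multiplication in $\mathbf{L}$.

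Next I would suppose, for contradiction, that $\mathbf{A}$ is universal, i.e., generates a dense subgroup of the unitary group on the logical space. Applying Lemma \ref{LemmaPL} to each $\bar{A}\in\mathbf{A}$: since $\bar{A}$ admits a bounded-spread implementation, we have $\bar{A}\bar{P}\bar{A}^\dag \in \mathbf{L}_\downarrow$ for every $\bar{P}\in\mathbf{P}_\downarrow$. Therefore every element of $\mathbf{A}$ satisfies the condition defining the generators of $\mathbf{B}$, which gives $\mathbf{A} \subseteq \mathbf{B}$ and hence $\langle \mathbf{A}\rangle \subseteq \mathbf{B}$. If $\mathbf{A}$ is universal then so is $\langle\mathbf{A}\rangle$, forcing $\mathbf{B}$ to be universal, in contradiction with the assumption that the code family is $\mathbf{B}$-constrained.

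There is no serious obstacle to this argument: the substantive step is already Lemma \ref{LemmaPL}, and the only point requiring a touch of care is verifying that the set of operators admitting bounded-spread implementations is well-behaved under composition (so that universality of the set is equivalent to universality of the generated group). One minor subtlety worth flagging is that an implementation of $\bar{A}$ is a channel whose action on non-logical operators is unconstrained, so composing implementations is permissible at the level of channels and the bounded-spread property, as verified above, is inherited by the composition; this is why we are free to pass from a universal set to the group it generates without leaving $\mathbf{B}$.
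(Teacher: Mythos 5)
Your proposal is correct and is essentially the paper's own argument: Lemma \ref{LemmaPL} places every logical operator admitting a bounded-spread implementation inside the generating set of $\mathbf{B}$, so the group they generate lies in $\mathbf{B}$ and cannot be universal when the code family is $\mathbf{B}$-constrained. The closure-under-composition observation is a harmless extra but not needed, since the containment of the set in the generators of $\mathbf{B}$ already yields the containment of the generated group.
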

\begin{proof}
By Lemma \ref{LemmaPL}, the set of logical operators that admit bounded-spread implementations is contained in the set $\left\{\bar{B}\in\mathbf{L}\middle| \bar{B}\bar{P}\bar{B}^\dag \in \mathbf{L}_\downarrow \,\, \forall \bar{P}\in\mathbf{P}_\downarrow \right\}$. Thus, if the group generated by this set is not universal then the set of logical operators that implement bounded-spread implementations cannot be universal.
\end{proof}

This theorem implies that any approach to fault tolerance based on bounded-spread quantum channels cannot be universal in $\mathbf{B}$-constrained code families. Indeed, Theorem \ref{Th2} implies the Eastin-Knill theorem \cite{Eastin} for $\mathbf{B}$-constrained code families and the Bravyi-K{\"o}nig theorem for conventional topological stabiliser codes \cite{Bravyi} (which we show below are $\mathbf{B}$-constrained). This is also true for non-unitary, bounded-spread implementations, such as those by braiding defects. In particular, Theorem 1 of Ref.~\cite{WebsterBraid} is also implied (for the case of defects with Pauli stabilisers).

From Theorem \ref{Th1} and Theorem \ref{Th2} we hence have the following corollary.
\begin{corollary}\label{Cor1}
A universal set of logical operators on a $\mathbf{B}$-constrained code family cannot be implemented by local noise preserving quantum channels.
\end{corollary}
Corollary \ref{Cor1} is our main no-go result. It shows that the most naturally fault-tolerant quantum channels for local stochasitic noise -- those that are local noise preserving -- are not sufficient for universal quantum computing in $\mathbf{B}$-constrained code families. This is significant because large and important classes of stabiliser code families are $\mathbf{B}$-constrained, as we now show.

\subsubsection{$\mathbf{B}$-constrained stabiliser code families}

Stabiliser codes are widely used in quantum computing architectures, and many of the most commonly-considered stabiliser code families are $\mathbf{B}$-constrained.  Here, we show that codes that are asymmetric or have infinite disjointness are necessarily $\mathbf{B}$-constrained, which implies that important classes of code families such as conventional topological stabiliser codes and concatenated stabiliser codes are $\mathbf{B}$-constrained.

Asymmetric code families are those with significantly different distances of logical Pauli operators. Precisely, a code family is symmetric if it admits a set of generators, $\mathbf{P}_g$ of the logical Pauli group such that $d_{\bar{P}}\propto d$ for all $\bar{P}\in\mathbf{P}_g$. Otherwise, it is asymmetric. Asymmetric code families are expected to be $\mathbf{B}$-constrained, as their logical Pauli operators can be partitioned into sets of lower-distance operators and higher-distance operators which cannot be interchanged by bounded-spread quantum channels.  This intuition can be made precise:  as shown in Appendix \ref{AppendixSecIIIB}, $\mathbf{L}_\downarrow$ for asymmetric code families is contained in the subspace of the logical operators spanned by $\mathbf{P}_\downarrow$, so $\mathbf{B}$ leaves invariant the subspace stabilised by these logical Pauli operators, precluding universality. Thus, asymmetric stabiliser code families are constrained by Theorem~\ref{Th2} and Corollary~\ref{Cor1}.

An illustrative example of an asymmetric code family is the three-dimensional surface code \cite{Vasmer}. This has logical $\bar{X}$ and $\bar{Z}$ operators supported on regions of two and one dimensions respectively, and so their ratio scales with the linear dimensions of the code making the code family asymmeteric. For this code family, $\mathbf{L}_\downarrow=\{\bar{I},\bar{Z}\}$ (since $\bar{Z}$ is the only one-dimensional logical operator), so $|\bar{0}\rangle$ is stabilised by all bounded-spread logical operator implementations. This precludes a universal set of bounded-spread logical operator implementations and hence a universal set of LNP logical operator implementations.

An alternative way for a code family to be $\mathbf{B}$-constrained is for the group generated by $\mathbf{L}_\downarrow$ to be the logical Pauli group. This is sufficient because then $\mathbf{B}$ is contained in the Clifford group, which is not universal. A sufficient condition for this is that the code family is symmetric and has infinite disjointness. The disjointness of a stabiliser code is a measure of how many almost disjoint representatives there are of each logical Pauli operator (see Appendix \ref{AppendixB} or Ref.~\cite{JochymOConnor} for a more precise definition) and infinite-disjointness code families are those with a disjointness, $\Delta$, satisfying $\lim_{l\to\infty} \Delta(l)=\infty$. As shown in Appendix \ref{AppendixSecIIIB}, for infinite-disjointness code families all logical non-Pauli operators, $\bar{U}$, satisfy $\lim_{l\to\infty}\frac{d_{\bar{U}}}{d_{\bar{P}}}=\infty$ for some logical Pauli operator $\bar{P}$. For a code family that is also symmetric, this implies that a generating set of logical Pauli operators is contained in $\mathbf{P}_\downarrow$ while no logical non-Pauli operators are in $\mathbf{L}_\downarrow$, so $\mathbf{B}$ is contained in the Clifford group. Thus, symmetric, infinite-disjointness stabiliser code families are constrained by Theorem \ref{Th2} and Corollary \ref{Cor1}.

An illustrative example of a symmetric, infinite-disjointness code family is the two-dimensional surface code. This has only string-like logical Pauli operators which all have the same distance, $l$, so it is symmetric. By contrast, all non-Pauli logical operators, such as the Hadamard gate, have support on the full code and so have distance $\Theta(l^2)$. Thus, we indeed have that $\mathbf{L}_\downarrow$ is the logical Pauli group and it is clear that logical Pauli operators cannot be mapped to logical non-Pauli operators by bounded-spread quantum channels, so the bounded-spread logical operator implementations are indeed in the Clifford group and so not universal.

Importantly, the two cases -- asymmetric and symmetric with infinite disjointness -- taken together imply that all stabiliser code families with infinite disjointness are $\mathbf{B}$-constrained. Thus, the no-go results apply to all infinite-disjointness code families which, as we show in Appendix \ref{AppendixB} includes a number of important classes of stabiliser code families. Indeed, all concatenated stabiliser code families have infinite disjointness (see Appendix \ref{AppendixB1}). Conventional topological stabiliser codes -- i.e.,~those with conventional topological order~\cite{Dua} -- also all have infinite disjointness (see Appendix \ref{AppendixB2}). These are topological stabiliser codes that can be associated with topological quantum field theories (TQFTs) with excitations that correspond to error syndromes and which can propagate freely throughout the code \cite{Dua}. This includes all two-dimensional topological stabiliser codes \cite{Bombin2} and all scale and translationally symmetric (STS) codes \cite{Yoshida}, including the surface and colour codes in all spatial dimensions. It remains an open question whether more general classes of LDPC codes also necessarily have infinite disjointness and are therefore also constrained by our no-go results.

We conclude this section by noting that our results for stabiliser codes straightforwardly generalise to subsystem codes. Specifically, for subsystem codes we can define distances and disjointness to be calculated only on the bare logical operators. Considering only the bare logical operators is sufficient because logical operator implementations necessarily map bare logical operators to bare logical operators \cite{Pastawski}. Hence, the proofs of all the results can naturally generalise to the case of subsystem codes by considering bare logical operators and so Theorems \ref{Th1} and \ref{Th2} and Corollary \ref{Cor1} also apply to subsystem codes.

\section{Universal Fault-Tolerant Logical Operator Implementations}\label{Sec:IV}
\subsection{Circumventing Corollary \ref{Cor1}}\label{Sec:IVAA}
The results of Sec.~\ref{Sec:III} place constraints on sets of fault-tolerant logical operators. Importantly, however, these no-go theorems also point to the ways in which these constraints can be circumvented. Specifically, we observe that the logical action of a quantum channel depends only on its action on logical operators, which are necessarily not correctable errors. However, it is sufficient for fault tolerance that a quantum channel maps correctable errors to correctable errors, regardless of its action on uncorrectable errors. Corollary \ref{Cor1} can thus be circumvented by constructing fault-tolerant quantum channels that have bounded spread on correctable errors -- so that correctable errors are not spread to uncorrectable errors -- but unbounded spread on uncorrectable errors.

Specifically, we generalise the class of local noise preserving quantum channels to \emph{essentially local noise preserving} (\emph{ELNP}) quantum channels. Informally, these are quantum channels that preserve the locality of stochastic noise models up to a set of errors of vanishing probability.  Precisely, we define two stochastic noise models, $\mathcal{E}$ and $\hat{\mathcal{E}}$ (with respective probability distributions $p$ and $\hat{p}$) to be \emph{essentially equivalent} if there exists a set of errors $\mathbf{E}$ such that $\lim_{l\to\infty} p(\mathbf{E})=1$ and $p(E)=\hat{p}(E)$ for all $E\in\mathbf{E}$. We define a stochastic noise model to be \emph{essentially local} if it is essentially equivalent to some local stochastic noise model.  We define quantum channel $\mathcal{A}$ to be \emph{essentially local noise preserving (ELNP)} if, for any set of local stochastic noise models $\{\mathcal{E}^{(t)}\}$, the corresponding $\mathcal{E}_{\mathcal{A}}$ is an essentially local stochastic noise model.

ELNP quantum channels straightforwardly inherit fault tolerance for local stochastic noise from LNP quantum channels. Indeed, if $\mathcal{E}$ and $\hat{\mathcal{E}}$ are equivalent noise models, then $\lim_{l\to\infty} \|\mathcal{E}-\hat{\mathcal{E}}\|=0$. Thus (by the triangle inequality on Eq.(~\ref{Threshold})), $\hat{\mathcal{E}}$ admits a threshold for a given code family and decoder if $\mathcal{E}$ does. Hence, threshold theorems for local stochastic noise trivially extend to essentially local stochastic noise models, and so imply the fault tolerance of ELNP quantum channels.

Unitary quantum channels that are ELNP still necessarily have bounded spread and hence cannot allow circumvention of Corollary \ref{Cor1} (as proven in Appendix \ref{AppendixA}). This is because for a unitary quantum channel, $\mathcal{U}(E)=UEU^\dag$, the spread is lower-bounded by the spread of some single-qubit error. This fact holds because any Pauli can be decomposed into a product of single-qubit Pauli operators, $P=\prod_i P_i$, and for a unitary operator, $U$, we have that $U\left(\prod_i P_i \right)U^\dag=\prod_i UP_iU^\dag$. For any single-qubit error, $P_i$, there exists a local stochastic noise model where the probability of $P_i$ is constant in $l$. Thus, an unbounded-spread unitary quantum channel necessarily results in an error with unbounded support size and hence cannot be ELNP.

Importantly, however, there exist non-unitary quantum channels that are ELNP but not LNP. Indeed, if $\mathcal{A}$ is a non-unitary quantum channel, then it does not necessarily follow that $\mathcal{A}(\prod_i P_i )=\prod_i \mathcal{A}(P_i)$. In particular, it is possible for a non-unitary quantum channel to have bounded spread on all correctable errors for some decoder but unbounded spread on a set of logical operators. Provided the decoder gives a threshold on all local stochastic noise models, the set of correctable errors necessarily asymptotically has probability one, and hence such a quantum channel is ELNP despite having unbounded spread. Non-unitary quantum channels of this type can be constructed by integrating error-correction, which by definition corrects correctable errors (thereby preventing their spread) while leaving logical operators unchanged. As we show in this section, a diverse range of techniques for realising a universal set of fault-tolerant logical operator implementations can be understood as instances of ELNP quantum channels of unbounded-spread constructed in this way.

The challenge in realising a logical operator implementation that integrates error-correction into its action in this way is that the two processes cannot be separated. Indeed, it is not sufficient to simply first perform error-correction and then the logical operator implementation, since new errors can arise between these processes that will not be corrected before the logical operator implementation acts on them. There are two ways, however, that the necessary integration can be achieved. 

One way is to decompose an unbounded-spread logical operator implementation into an unbounded-length sequence of bounded-spread logical operator implementations. This decomposition can then be interspersed with conventional error-correction which removes correctable errors before they are allowed to grow too much under the action of the logical operator implementation. We refer to such channels as \emph{gradational} and consider them in Sec.~\ref{Sec:IVA}.

The second way is to use an unconventional approach to error-correction that can be integrated more directly into a logical operator implementation. Specifically, we note that error correction by teleportation (as described in Ref.~\cite{Knill}) allows for simultaneous implementation of a logical operator in a way that yields the teleportation gadget of Ref.~\cite{Gottesman}. Arguably, this idea is what underpins standard universal, fault-tolerant schemes, of which approaches such as state injection and code switching are variants. We refer to such channels as \emph{intrinsically fault-tolerant} and consider them in Sec.~\ref{Sec:IVB}.

Gradational and intrinsically fault-tolerant quantum channels represent two broad classes of fault-tolerant logical operator implementations that can circumvent our no-go results. Table \ref{Table1} summarises their manifestations in a range of approaches to fault-tolerantly implementing a universal set of logical operators.

\begin{table*}
{
  \begin{tabular}{| c | c | c |}
    \hline
     & \textbf{Code Switching} & \textbf{Direct Implementation} \\ \hline
    \textbf{Gradational} & Code Conversion \cite{Hwang,Colladay} & Pieceable Fault Tolerance \cite{Knill2,Yoder2}   \\ 
& Dimensional Conversion & with Delayed Error Correction \cite{Brown}\\
\hline
    \textbf{Intrinsically Fault-Tolerant} & Gauge fixing \cite{Bombin,Paetznik,Anderson} & Magic State Injection \cite{Campbell} \\
 & Dimensional Jumping \cite{Bombin5} & Coherent State Injection \cite{WebsterBraid,Vasmer,Bombin3,Yoder}\\
\hline
  \end{tabular}}
\caption{Summary of approaches to fault-tolerant, universal gate sets. The approaches are classified as gradational or intrinsically fault-tolerant, corresponding to the two classes presented in Sec.~\ref{Sec:IVA} and Sec.~\ref{Sec:IVB} respectively. They are also classified based on whether they are a form of code switching (fault-tolerant mapping between two codes) or direct implementation (of an unbounded-spread logical operator within a single code).}
\label{Table1}
\end{table*}

\subsection{Gradational Logical Operator Implementations}\label{Sec:IVA}
A gradational logical operator implementation is one of the form
\begin{equation}\label{IU3}
\mathcal{A}=\prod_{i=1}^t \mathcal{K}_i\mathcal{U}_i
\end{equation}
where each $\mathcal{U}_i$ is a unitary, LNP (and hence bounded-spread) quantum channel and each $\mathcal{K}_i$ is a quantum channel that allows for error-correction, but acts trivially on logical operators. The action of such an implementation consists of channels $\mathcal{U}_i$ that {each fault-tolerantly} transform the logical space interspersed with channels $\mathcal{K}_i$ that are used to remove errors to ensure that the state of the code remains in the appropriate logical space at all times. Gradational logical operator implementations are thus fault-tolerant by construction. Indeed, the non-unitary channels, $\mathcal{K}_i$,  ensure they are ELNP by preventing the spread of correctable errors, but also allow the spread of uncorrectable errors necessary to circumvent Corollary \ref{Cor1}. 

Gradational logical operator implementations can be understood as the minimal approach to circumventing Corollary \ref{Cor1}. Indeed, gradational logical operator implementations can be decomposed into unitary quantum channels used to transform the logical information interspersed with error correction, which is in the spirit of standard fault-tolerant quantum computing. However, unlike unitary implementations of logical operators, intermediate error-correction is integrated into gradational logical operator implementations themselves, and cannot be avoided by increasing the code size. Indeed, since the spread of each timestep is bounded (as each $\mathcal{U}_i$ must have bounded spread), the time taken for an unbounded-spread, gradational logical operator implementation necessarily grows with the code size, $l$. Thus, gradational logical operator implementations allow for universal fault-tolerant quantum computing in a natural way, but only at the cost of introducing a significant time overhead associated with unbounded-spread logical operator implementations.

We now consider two common approaches to fault-tolerant logic gates that are specific examples of gradational logical operator implementations:  code deformation, and pieceable fault tolerance.

\subsubsection{Code Deformation}\label{Sec:IVA1}
An important class of gradational logical operator implementations is \emph{code deformation}.  A gradational implementation of logical operator $\bar{A}$ can be constructed by decomposing a unitary implementation of $\bar{A}$ into single and two-qubit gates.  By staggering the implementation of these gates such that only a constant number are implemented in each timestep, we can ensure that the spread at each such timestep remains bounded. The product of gates performed at the $i$th timestep then corresponds to a bounded-spread unitary quantum channel $\mathcal{U}_i$.  As a consequence of this staggering, each $\mathcal{U}_i$ is not individually a logical operator implementation, and in general will not preserve the logical space.  Rather, each $\mathcal{U}_i$ will map the logical code space to the logical code space of a different code. This process can be understood as a form of code deformation, $\mathcal{C}\to\mathcal{C}_1\to\cdots\to\mathcal{C}_{m-1}\to\mathcal{C}$. This code deformation can be interspersed with error correction performed at the $i$th timestep by a decoder, $\mathcal{D}_i$ on code $\mathcal{C}_i,$ to construct a gradational implementation of logical operator $\bar{A}$ of the form:
\begin{equation}
\mathcal{A}=\prod_{i=1}^t \mathcal{D}_i\mathcal{U}_i \,.
\end{equation}
We note that it is standard to instead present code deformation as a measurement-based operation, in which the unitaries are omitted and measurements used to simultaneously project to a new codespace and extract an error syndrome. Our presentation is equivalent, but separates the two processes to emphasise how bounded-spread unitaries can be used to map between successive codes, such that non-unitary operations are only essential for error correction.

To ensure fault tolerance, code deformation processes must be ELNP. This depends on a number of conditions. First, each quantum channel $\mathcal{U}_i$ should be LNP, which is consistent with Theorem \ref{Th1} because it has bounded spread. Second, each intermediate code family, $\mathcal{C}_i$, must have a threshold for local, stochastic noise, so that there exists a decoder $\mathcal{D}_i$ that gives a threshold on the effective noise models at intermediate times. This second condition is significant and poses a challenge, as these intermediate codes do not necessarily maintain important structures of the original code that ensure a threshold. For example, even if the original code is LDPC, the intermediate codes are not necessarily LPDC nor even stabiliser codes.

An important first example of code deformation is braiding defects in topological (and hypergraph product \cite{Krishna}) codes. However, while such braiding logical operator implementations are not unitary, they nonetheless have bounded spread (as shown in Appendix \ref{AppendixC3}) meaning that they are also constrained by Corollary \ref{Cor1}.

An alternative type of code deformation that can allow Corollary~\ref{Cor1} to be circumvented is \emph{code conversion} \cite{Hill}. Examples can be found by re-expressing conventional code switching~\cite{Paetznik,Anderson,Bombin} as a gradational process. It involves deforming between an initial and final code that together admit a universal set of bounded-spread logical operator implementations, by passing through a sequence of codes such that each mapping between successive codes has bounded spread. One such example is concatenated Steane and Reed-Muller codes \cite{Steane,Steane2}, which are connected by codes that are mixed concatenations of each type of code and so can be mapped between by code conversion (as detailed in Appendix \ref{AppendixD2}) \cite{Hwang,Colladay}. This allows for the (transversal) Clifford group on the concatenated Steane code to be supplemented by a gradational implementation of the non-Clifford $\bar{T}$ gate to realise a universal gate set. 

A second example is \emph{dimensional conversion}, which is inspired by dimensional jumping~\cite{Bombin5}. It is code conversion  between topological codes of different spatial dimensions (discussed in Appendix \ref{AppendixD2}). One such example is between the 3D and 2D surface codes, which are connected by a sequence of 3D surface codes on rectangular prisms. Dimensional conversion can be used to deform between these codes to allow for the non-Clifford  $\overline{\text{CCZ}}$, which completes a universal gate set for the 2D surface code.

Dimensional conversion on higher dimensional codes illuminates the role of non-unitarity in circumventing Corollary~\ref{Cor1}. Specifically, (as discussed in Appendix~\ref{AppendixD2}) we can achieve a universal set of fault-tolerant logical operator implementations by using dimensional conversion between 4D and 6D surface codes such that the codes at all timesteps are self-correcting~\cite{Dennis,Brown4}. Such self-correcting memories allow for passive error-correction, i.e.~dissipation of energy into the environment, to be used instead of active error-correction. This shows that non-unitary dissipative channels can be used to circumvent Corollary~\ref{Cor1}, disproving the common belief that measurements and classical processing are essential to fault-tolerantly implementing a universal gate set.

\subsubsection{Pieceable Fault Tolerance}\label{Sec:IVA2}
Even without a convenient set of intermediate codes, a gradational logical operator implementation can still be possible. An important example is \emph{pieceable fault tolerance} \cite{Knill2,Yoder2}. This approach can be used in the case of logical operators on Calderbank-Shor-Steane (CSS) codes where all the unitary operators, $U_i$, used in the implementation of $\bar{U}$ are diagonal in the computational basis. In this case, $Z$-type stabilisers are left unchanged throughout and so $X$ errors can be corrected in the standard way by measuring $Z$ stabilisers at appropriate intervals and using the initial decoder for such errors. At the same time, $Z$ errors are not spread by such an implementation and so their correction can be deferred until the end of the process when the state has been returned to the original logical space. Such deferral allows $Z$ errors to accumulate uncorrected over the time taken by the logical operator implementation. However, for concatenated codes, this time corresponds to a constant factor per layer of concatenation, and so fault tolerance can be ensured \cite{Knill2,Yoder2}.

For topological codes, simply omitting correction of $Z$ errors at intermediate times would undermine fault tolerance \cite{Yoder2}. This is because the time taken by pieceably fault-tolerant logical operator implementations on a topological code  scales polynomially in the code distance, meaning unacceptably many $Z$ errors can accumulate.
This can be addressed by extracting syndrome information on these accumulated errors by making appropriate $X$ measurements on intermediate states of the code, and later performing the corrections of $Z$ errors based on this information at the end of the implementation of the logical operator. This addition of \emph{delayed error correction} to pieceable fault tolerance can allow for a fault-tolerant gradational implementation on topological codes without fully implementing thresholded decoders on each intermediate code. Indeed,  as  discussed  in  Appendix  \ref{AppendixD3}, this  technique can be understood as the basis for the fault-tolerant implementation of the non-Clifford $\overline{\text{CCZ}}$ logical operator on the two-dimensional surface code in Ref.~\cite{Brown}.

\subsection{Intrinsically Fault-Tolerant Logical Operator Implementations}\label{Sec:IVB}
Intrinsically fault-tolerant logical operator implementations are an alternative approach to circumventing Corollary~\ref{Cor1}.  Unlike gradational implementations, intrinsically fault-tolerant implementations can be performed in constant time (independent of code size). They require that error-correction and the logical operator implementation are performed as a single process, to ensure that all correctable errors present at the time the logical operator is implemented are corrected. In particular, since errors can arise at the beginning of a timestep at which a logical operator is implemented, we cannot simply apply standard error-correction before or after the implementation of the logic gate.

However, there exists an alternative approach to error correction that can meet this requirement:  error-correction can be implemented via teleportation~\cite{Knill3,Knill} (as in Fig.~\ref{Fig:Tel} with $\bar{U}=\bar{I}$).  Furthermore, this technique allows for a logical operator $\bar{U}$ to be implemented simultaneously with the error-correction gadget~\cite{Gottesman} via gate teleportation, as shown in Fig.~\ref{Fig:Tel}. This provides a perspective on why this teleportation-based approach is so effective for implementing fault-tolerant logical operators -- it allows for fault-tolerant implementation of logical operators of unbounded spread by correcting correctable errors while transforming logical operators. The elements of the teleportation gadget that are non-unitary are the adaptive quantum channels -- implementations of logical operators conditional on logical measurement outcomes. These channels propagate logical operators but not correctable errors, as needed to circumvent Corollary~\ref{Cor1}, since the former affect the logical measurement outcome but the latter do not.

\begin{figure}
\centering
\includegraphics[scale=0.87]{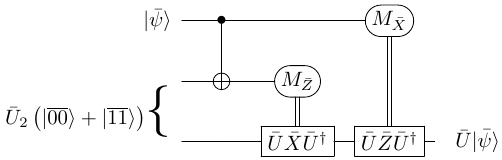}
\caption{Teleportation gadget for intrinsically fault-tolerant implemention of single-qubit logical operator $\bar{U}$. In the case where $\bar{U}=\bar{I}$ it implements error-correction by teleportation. For general $\bar{U}$, this error-correction ensures that $\bar{U}$ is implemented fault-tolerantly. \label{Fig:Tel}}
\end{figure}

In the special case of CSS codes, and logical operator implementations that spread errors only of one type (i.e.,~only $X$ or only $Z$ errors), a simplification of this approach is possible. Specifically, in this case teleportation error-correction is only necessary for the type of error that is spread by the implementation of $\bar{U}$ as the other type of error can be corrected in the standard way at a later time. This gives simplified teleportation gadgets, as shown in Fig.~\ref{Fig:Tel1}.

\begin{figure}
\centering
\subfigure[\label{Fig:Tel1b}]{
\includegraphics{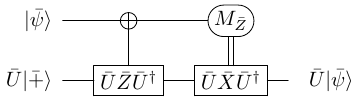}
}
\hspace{3cm}
\subfigure[\label{Fig:Tel1a}]{
\includegraphics{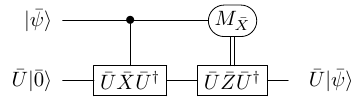}
}
\caption{Simplified teleportation gadgets implementing $\bar{U}$ on CSS codes that are intrinsically fault-tolerant with respect to (a)$X$ errors and (b) $Z$ errors. (The first gate in (a) represents $\overline{\text{CNOT}}_{2,1}$ conjugated by $\bar{U}$ on the control qubit.) \label{Fig:Tel1}}
\end{figure}

Since the teleportation gadget emerges naturally from the requirements of an intrinsically fault-tolerant logical operator implementation, it manifests in a range of apparently distinct approaches to implementing a universal set of fault-tolerant logical operators. In particular, we now briefly review its use in the two most standard approaches: state injection and code switching.

\subsubsection{State Injection}\label{Sec:IVB3}
State injection is a straightforward application of teleportation, named to put greater emphasis on the preparation of the state in the case that the other circuit components admit simple fault-tolerant implementations.

\paragraph{Symmetric Codes -- Magic State Injection:}
Infinite-disjointness codes that are symmetric (i.e.,~those with approximately equal distances of logical Pauli operators) can admit only Clifford bounded-spread logical operators (as shown by the proof of Lemma \ref{Lem4}). Completing a universal gate set thus requires a non-Clifford logical operator, typically $\bar{T}$ (a logical  $\bar{Z}$ rotation by $\pi/4$). Using teleportation to implement such an operator is referred to as \emph{magic state injection}.  Using the circuit in Fig.~\ref{Fig:Tel1b} with $\bar{U}=\bar{T}$, we obtain the circuit for magic state injection shown in Fig.~\ref{Fig:MSI}. The offline preparation of the magic state, $\bar{T}|\bar{+}\rangle$, is generally the most challenging part of the procedure, but this can be achieved (albeit with significant resource cost) by magic state distillation \cite{Bravyi2}. Examples of codes for which magic state injection can be effective include two-dimensional topological codes \cite{Litinski,Chamberland3} and concatenated Steane codes \cite{Chamberland2}.

\begin{figure}
\centering
\centerline{
\includegraphics{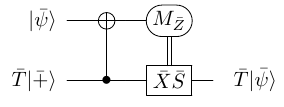}
}
\caption{Magic state injection circuit obtained by letting $\bar{U}=\bar{T}$ in Fig.~\ref{Fig:Tel1b} \label{Fig:MSI}}
\end{figure}

\paragraph{Asymmetric Codes -- Coherent State Injection:}
Asymmetric codes (i.e.~those with significantly different distances of logical Pauli operators) cannot admit bounded-spread logical operator implementations that interchange arbitary Pauli operators (see proof of Lemma \ref{Lem3}). For example, such codes with a single logical qubit do not admit a bounded-spread logical Hadamard operator, which is often sufficient to complete a universal gate set. Letting $\bar{U}=\bar{H}$ in the circuit in Fig.~\ref{Fig:Tel1a} we obtain the circuit shown in Fig.~\ref{Fig:CSI}. We refer to this technique as \emph{coherent state injection} since the injected state  $|\bar{+}\rangle$ is coherent -- meaning that it is a superposition of computational basis states \cite{Takagi}-- and it is this resource that cannot be created by bounded-spread logical operators in asymmetric codes, analogously to magic in symmetric codes. Unlike magic states, coherent states can be stabiliser states and thus can be prepared even in asymmetric codes in a straightforward way. Indeed, they can be prepared in constant-time in codes that admit single-shot error-correction of $X$ errors. Examples of codes for which coherent state injection can complete a universal logical gate set include three-dimensional surface codes \cite{Vasmer,WebsterBraid}, three-dimensional colour codes \cite{Bombin3} and concatenated Reed-Muller codes. We note that coherent state injection can also be used effectively for subsystem codes, such as the Bacon-Shor code~\cite{Yoder}.

\begin{figure}
\centering
\centerline{
\includegraphics{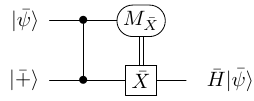}
}
\caption{Coherent state injection circuit obtained by letting $\bar{U}=\bar{H}$ in Fig.~\ref{Fig:Tel1a} \label{Fig:CSI}}
\end{figure}

\subsubsection{Code Switching}\label{Sec:IVB4}
Code switching is an approach by which logical information is teleported between two codes to complete a universal gate set. It can be understood as the intrinsically fault-tolerant counterpart to code conversion. Specifically, we take two codes, $\mathcal{C}_1$ and $\mathcal{C}_2$ and construct fault-tolerant implementations of the logical identity between the codes; $\bar{I}_{1\to 2}: \mathcal{C}_1 \to \mathcal{C}_2$  and its inverse, $\bar{I}_{2\to 1}:\mathcal{C}_2 \to \mathcal{C}_1$. These operators allow any fault-tolerant logical operator, $\bar{U}_2$, of $\mathcal{C}_2$ to be implemented fault-tolerantly on $\mathcal{C}_1$ by $\bar{U}_1=\bar{I}_{2\to 1}\bar{U}_2\bar{I}_{1\to 2}$. They are thus sufficient to lift a set of bounded-spread logical operator implementations to universality in the case that the union of bounded-spread logical operator implementations of $\mathcal{C}_1$ and $\mathcal{C}_2$ is universal. In this case, Theorem \ref{Th2} implies that if $\mathcal{C}_1$ and $\mathcal{C}_2$ together admit bounded-spread implementations of a universal set of logical operators, then $\bar{I}_{1\to 2}$ or its inverse must have unbounded spread.

The essential question for realising a universal gate set by code switching is thus how $\bar{I}_{1\to 2}$ and $\bar{I}_{2\to 1}$ can be implemented. This can be done by applying teleportation with the special case that $\bar{U}=\bar{I}$. Unlike state injection, the primary challenge in implementing this circuit is generally not state preparation, but instead the implementation of the entangling gate between the two codes.

Code switching has been effectively used to circumvent Corollary \ref{Cor1} in topological and concatenated code families. Specifically, dimensional jumping \cite{Bombin5} can be understood as a form of code switching between two-dimensional and three-dimensional colour codes. Similarly, code switching can also be used to switch between concatenated Steane and Reed-Muller codes \cite{Anderson,Paetznik}. While these code switching schemes are typically described in the language of gauge fixing, they can be translated into fundamentally equivalent circuits based on teleportation between different codes, as discussed in Appendix \ref{AppendixD1}.

\section{Conclusion}

In this paper, we have provided a new lens through which to view architecture design for universal, fault-tolerant quantum computing.  This lens takes the form of a general no-go result (Corollary \ref{Cor1}), which excludes the possibility of { implementing} a universal set of logical operators using only LNP quantum channels on a large class of stabiliser (and subsystem) code families, including important examples like concatenated codes and conventional topological stabiliser codes. With this new perspective, we see how natural approaches to universal, fault-tolerant logical operator implementations can be constructed by using non-unitary quantum channels that are not LNP but are nonetheless fault-tolerant, and how a range of apparently disparate schemes that achieve universal fault-tolerant logic can be understood as instances of our approach.

A natural question that remains open is: in what other ways can our no-go result can be circumvented? A natural approach is to consider other classes of fault-tolerant quantum channels that are not LNP. In particular, we could construct quantum channels that map local stochastic noise models to noise models that are not local but nonetheless retain desirable properties that ensure a threshold. Specifically, such channels can allow for local errors to spread to particular errors that have a large support but are nonetheless correctable for appropriately constructed decoders \cite{JochymOConnor2,JochymOConnor4,Wan}. In particular, Jochym O'Connor and Laflamme have presented a universal scheme using non-ELNP logical operator implementations that are constructed in such a way that local errors are mapped to errors that are correctable for any decoder that respects the concatenated structure of the code family \cite{JochymOConnor2}, and Jochym O'Connor has presented a similar scheme based on homological  product codes \cite{JochymOConnor4}. We discuss this further in Appendix \ref{AppendixE}.

Another approach could be to use a code family that is not $\mathbf{B}$-constrained. Such a code family would be unconstrained by Theorem \ref{Th2} and so could conceivably admit a universal set of logical operators implemented by LNP quantum channels. As we have shown, standard code families such as conventional topological stabiliser codes and concatenated stabiliser codes are $\mathbf{B}$-constrained as they necessarily have infinite disjointness. However, it is unknown whether other classes of code families, such as topological stabiliser codes with fracton order, or more general LDPC codes, necessarily have infinite disjointness. In particular, since the disjointness of a code is upper-bounded by the ratio of the number of physical qubits ($n$) to the code distance ($d$)~\cite{JochymOConnor}, if LDPC codes with $d$ linear in $n$ could be found then they could be unconstrained by Corollary~\ref{Cor1}. Such a code family could potentially allow for universal, fault-tolerant quantum computing using only unitary quantum channels and thus may avoid the challenges and high overheads associated with universality in standard code families.

Conversely, if more general LDPC codes are shown to have infinite disjointness and hence are constrained by our no-go results, then the results of Sec.~\ref{Sec:IV} could be applied to them to overcome this obstacle. Indeed, the analysis of that section could serve as a template for constructing a universal set of fault-tolerant logical operations for important code families, such as LDPC codes with asymptotically non-zero rate. This could open up promising new avenues for more efficient and practical fault-tolerant quantum computing.

\begin{acknowledgments}
This work is supported by the Australian Research Council via the Centre of Excellence in Engineered Quantum Systems (EQUS) project number CE170100009. Research at Perimeter Institute is supported in part by the Government of Canada through the Department of Innovation, Science and Economic Development Canada and by the Province of Ontario through the Ministry of Colleges and Universities. T.R.S.~acknowledges support from University College London and the Engineering and Physical Sciences Research Council [grant number EP/L015242/1]. The authors would like to thank Dan Browne, Simon Burton, Robert Harris, Aleksander Kubica and Armanda Quintavalle for helpful discussions.
\end{acknowledgments}

\appendix
\section{Examples of Bounded-Spread Logical Operator Implementations}\label{AppendixC}
In this appendix, we review important classes of bounded-spread logical operator implementations. Specifically, we identify such classes, consider their fault tolerance and show that they indeed have bounded spread. This is important as it shows the breadth of applicability of Theorem \ref{Th2} to a range of types of logical operator implementations.

\subsection{Transversal Gates}
A transversal gate is a logical operator implementation that can be decomposed into a tensor product of physical unitary operators such that no more than one physical qubit from each code block is in the support of each of these operators \cite{Eastin}. Transversal gates are generally assumed to be fault-tolerant on any code family. This assumption is justified for local stochastic noise, since transversal gates are LNP. Indeed, for a local stochastic noise model $\mathcal{E}$ with noise rate $\eta$, and a transversal gate acting on $m$ code blocks implemented by $\mathcal{A}$, $p_{\mathcal{A}}\left(E|\mathbf{A}\subseteq \text{supp}(E)\right)\leq \eta^{\frac{|\mathbf{A}|}{m}}=\left(\eta^{\frac{1}{m}}\right)^{|\mathbf{A}|}$. Since the number of code blocks, $m$, is constant in $l$, this implies that $\mathcal{E}_{\mathcal{A}}$ is a local stochastic noise model with noise rate $\eta_{\mathcal{A}}=\eta^{\frac{1}{m}}$.

Theorem \ref{Th1} thus implies that transversal gates must have bounded spread. Indeed, transversal gates clearly have a spread upper-bounded by the number of code blocks on which they act, which means that they have bounded spread (when acting on a cosntant number of code blocks). This corroborates that transversal gates cannot be universal on $\mathbf{B}$-constrained code families, by Theorem \ref{Th2}. Indeed, the fact that transversal gates do not spread errors at all within a code block means that the requirement of infinite disjointness can be dropped and so Theorem \ref{Th2} can be generalised for transversal gates on any stabiliser code family \cite{JochymOConnor}. In fact, it is known by other means that transversal gates cannot be universal on any code family \cite{Eastin}.

\subsection{Locality-Preserving Logical Operators}
A locality-preserving logical operator is a type of logical operator implementation on topological codes that preserves the geometric locality of errors. Locality-preserving logical operator implementations are typically unitary, but we do not assume that they necessarily are. We do assume, however, that they take constant time (since they only require interactions across regions of constant size) and so such an implementation $\mathcal{A}$ has noisy counterpart $\mathcal{A}_{\mathcal{E}_\eta}=\mathcal{A}\circ \mathcal{E}_\eta$. 

Locality-preserving logical operators also have bounded spread. Indeed, a locality-preserving logical operator, $\mathcal{A}$, maps an operator, $E$, to $\mathcal{A}(E)$ with support contained in an $\epsilon$-neigbourhood around $\text{supp}(E)$, for some constant $\epsilon$. In a $D$-dimensional code, such an $\epsilon$-neighbourhood is contained in the union of $D$-cubes with side length $2\epsilon+1$ centred on each qubit in the support of $E$. Thus, the spread of any error, $E$, under the action of a locality-preserving logical operator is bounded by
\begin{equation}\label{EqLPLOs}
s_{\mathcal{A}}(E)\leq \frac{|\text{supp}(E)|(2\epsilon+1)^D}{|\text{supp}(E)|} = (2\epsilon+1)^D
\end{equation}
and so they indeed have bounded spread. Since they have bounded spread, locality-preserving logical operators are not sufficient for universal quantum computing on infinite-disjointness codes, by Theorem \ref{Th2}. In particular, this applies to conventional topological stabiliser codes, as shown in Appendix \ref{AppendixB}. This is consistent with existing no-go results constraining the gate sets implementable by locality-preserving logical operators \cite{Bravyi,Pastawski,WebsterLPLO}.

Locality-preserving logical operators are also LNP. Indeed, assume that $\mathcal{A}$ is a locality-preserving logical operator and $\mathbf{A}\subseteq \text{supp}\left(\mathcal{A}(E)\right)$ for some set of qubits, $\mathbf{A}$. Then, necessarily, $\mathbf{N}_{\mathbf{A}}\subseteq \text{supp}(E)$, where $\mathbf{N}_{\mathbf{A}}$ is an $\epsilon$-neighbourhood around $\mathbf{A}$. Thus, the probability, that $\mathbf{A}\subseteq \text{supp}\left(\mathcal{A}(E)\right)$ satisfies 
\begin{equation}
p_{\mathcal{A}}(E|\mathbf{A}\subseteq \text{supp}(E))\leq p(\mathbf{A}\cup N_{\mathbf{A}}\subseteq \text{supp}(E))\leq\eta^{|\mathbf{N}_{\mathbf{A}}|}
\end{equation}
As discussed in the previous paragraph, the $\epsilon$-neighbourhood around $\mathbf{A}$ satisfes $|\mathbf{N}_{\mathbf{A}}|\leq (2\epsilon+1)^D|\mathbf{A}|$. Thus, if $\mathcal{E}$ is a local stochastic noise model, then the probability that $\mathbf{A}$ is contained in an error under the effective noise model $\mathcal{E}_{\mathcal{A}}$ satisfies
\begin{equation}
p_{\mathcal{A}}(E|\mathbf{A}\subseteq \text{supp}(E))\leq \eta^{(2\epsilon+1)^D|\mathbf{A}|}=\left(\eta^{(2\epsilon+1)^D}\right)^{|\mathbf{A}|}
\end{equation}
Thus, $\mathcal{E}_{\mathcal{A}}$ is a local stochastic noise model with noise rate $\eta_{\mathcal{A}}=\eta^{(2\epsilon+1)^D}$ and hence $\mathcal{A}$ is LNP. 

\subsection{Braiding Defects}\label{AppendixC3}
Another common approach to fault tolerance in topological codes is the introduction of defects which can be braided to implement logical operators \cite{WebsterBraid,Scruby,Fowler2,Brown3,Bombin6,Kesselring,Krishna2}. More recently, it has been shown that this approach can also be taken in non-topological LDPC codes, such as hypergraph product codes \cite{Krishna}. Braiding defects is a form of code deformation and so can be fault-tolerant provided that decoders that give thresholds can be applied at intermediate times to meet the requirements discussed in Sec.~\ref{Sec:IVA1}. This is relatively easy to achieve in this case, since the fact that each of the intermediate codes differ only by the configuration of defects means the same approach to decoding can be taken in all cases. 

General code deformations can have unbounded spread while being fault-tolerant and so can allow universality. However, it is known that braiding defects in topological stabiliser codes cannot be universal \cite{WebsterBraid}. This is because logical operators implemented by braiding defects have bounded spread and so are constrained by Theorem \ref{Th2}. 

Indeed,  assume that the standard encoding into defects is used \cite{WebsterBraid}. Assume also that all topologically non-trivial paths in a topological stabiliser code with defects all have lengths that are $\Theta(l)$, meaning that they differ by at most a constant factor in the size of the code. This is reasonable for the purpose of understanding the logical operators implementable by braiding, since it can be satisfied by choosing appropriate sizes and separations of defects and the action of a braiding logical operator on the logical space is independent of these factors. Under this assumption, all non-trivial braiding processes are topologically equivalent to a product of processes in which each point on a defect is moved through a distance $\Theta(l)$. Since the image of a logical Pauli operator under the action of a braiding logical operator can be represented by the topological excitation used to implement the original logical operator following a (possibly different) braiding process \cite{WebsterBraid}, and all braiding processes differ by a factor, $c$, bounded in $l$, the spread of any logical Pauli operator, $\bar{P}$, by braiding logical operator implementation $\mathcal{B}$ is bounded by
\begin{equation}
S_{\mathcal{B}}(\bar{P})\leq c
\end{equation}
An example of the bounded spread of logical operators under braiding is shown in Fig.~\ref{FigBraid}. 

Meanwhile, decoding during braiding corrects all local errors (which hence are removed before they are spread beyond a constant-size region), and maps only errors of support size proportional to the code distance to logical Pauli operators, which increases their support size only by a constant. Thus, the spread of errors under braiding is also bounded. Hence, braiding defects necessarily has bounded spread.

Thus, a universal set of logical operator implementations cannot be realised by braiding defects in topological stabiliser codes, in spite of the fact that such implementations are non-unitary.

\begin{figure}
\includegraphics{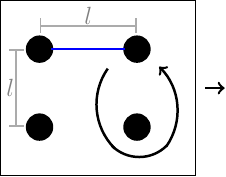}
\includegraphics{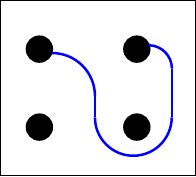}
\caption{Example of a logical operator implemented by braiding defects. The defect configuration is four holes, such that the closest holes are separated by a distance $l$. Braiding one hole around another grows the support of a string-like logical operator of length $l$ (blue) but it remains string-like and of length $\Theta(l)$, so its spread under the braid is constant. \label{FigBraid}}
\end{figure}

\section{Sufficient Conditions for a Code Family to be $\mathbf{B}$-constrained}\label{AppendixSecIIIB}
In this appendix we show that stabiliser code families that are either asymmetric or both symmetric and infinite-disjointness are necessarily $\mathbf{B}$-constrained, as described in Sec.~\ref{Sec:IIIB}. For these proofs, recall that 
\begin{align}
\mathbf{L}_\downarrow &= \left\{\bar{L}\in\mathbf{L}\middle|\exists C\in\mathbb{R}: \frac{d_{\bar{L}}}{d}\leq C \text{ for all }l\in\mathbb{N}\right\}\\
\mathbf{P}_\downarrow &=\mathbf{L}_\downarrow\cap\mathbf{P}_n\\
\mathbf{B} &=\left\langle\left\{\bar{B}\in\mathbf{L}\middle| \bar{B}\bar{P}\bar{B}^\dag \in \mathbf{L}_\downarrow \,\, \forall \bar{P}\in\mathbf{P}_\downarrow \right\}\right\rangle
\end{align}
and we define a stabiliser code family to be $\mathbf{B}$-constrained if $\mathbf{B}$ is not universal. 

\begin{lemma}\label{Lem3}
An asymmetric stabiliser code family is necessarily $\mathbf{B}$-constrained.
\end{lemma}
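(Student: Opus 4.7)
The plan is to exhibit a proper $*$-subalgebra $\mathcal{A}$ of the logical operator algebra that is normalised by every element of $\mathbf{B}$, from which non-universality follows by standard representation theory. First I would unpack the asymmetry hypothesis: since the code distance $d$ is the minimum weight of any non-trivial logical Pauli, every non-identity element of $\mathbf{P}_\downarrow$ has distance $\Theta(d)$, so $\mathbf{P}_\downarrow$ comprises precisely the logical Paulis with distance linear in $d$. Asymmetry then implies simultaneously that $\mathbf{P}_\downarrow$ does \emph{not} generate the full logical Pauli group (otherwise a generating subset of $\mathbf{P}_\downarrow$ would be a linear-distance generating set, contradicting asymmetry) yet does contain at least one non-identity element (a minimum-weight logical Pauli has distance $d$). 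Hence $\mathcal{A} := \mathrm{span}(\mathbf{P}_\downarrow)$, viewed as a $*$-subalgebra of the logical operator algebra, is a proper, non-scalar subalgebra.

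Next I would prove the key lemma $\mathbf{L}_\downarrow \subseteq \mathcal{A}$. Given $\bar{L} \in \mathbf{L}_\downarrow$ with physical representative $L$ of weight $\le Cd$, expand $L = \sum_E c_E E$ in the physical Pauli basis; every $E$ with $c_E \neq 0$ is supported in $\mathrm{supp}(L)$ and so has weight $\le Cd$. Projecting $L$ onto the code subspace annihilates every Pauli that anticommutes with some stabiliser, while normaliser Paulis differing by a stabiliser implement the same logical Pauli. Hence the logical-Pauli expansion of $\bar{L}$ has a non-zero coefficient on $\bar{Q}$ only if some weight-$\le Cd$ representative of $\bar{Q}$ appears among the $E$'s, forcing $d_{\bar{Q}} \le Cd$ and thus $\bar{Q} \in \mathbf{P}_\downarrow$. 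Therefore $\bar{L} \in \mathcal{A}$.

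Given the key lemma, the remainder is short. For a generator $\bar{B}$ of $\mathbf{B}$ and any $\bar{P} \in \mathbf{P}_\downarrow$, $\bar{B}\bar{P}\bar{B}^\dag \in \mathbf{L}_\downarrow \subseteq \mathcal{A}$; since conjugation is multiplicative and extends linearly, $\bar{B}\mathcal{A}\bar{B}^\dag \subseteq \mathcal{A}$, and a dimension count (conjugation is injective on the finite-dimensional $\mathcal{A}$) upgrades this to equality, so $\mathbf{B} \subseteq N(\mathcal{A})$. Because $\mathcal{A}$ is a proper, non-scalar $*$-subalgebra, $\mathcal{H}_L$ decomposes as $\bigoplus_i V_i \otimes W_i$ with $\mathcal{A}$ acting as $M(V_i) \otimes I_{W_i}$, and every element of $N(\mathcal{A})$ respects this decomposition (permuting summands of equal dimension and acting as $A \otimes B$ within each). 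Such operators can create neither cross-sector superpositions nor arbitrary entanglement between the $V$- and $W$-factors of a summand, so $N(\mathcal{A})$ — and hence $\mathbf{B}$ — is not universal, as required.

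The main obstacle is the key lemma of the second paragraph: the crux is translating a weight bound on one physical representative of $\bar{L}$ into a constraint on the full logical-Pauli content of $\bar{L}$, which requires careful bookkeeping of the equivalence classes of physical Paulis under multiplication by stabilisers. The remaining steps — the asymmetry unpacking, the multiplicative closure under conjugation, and the representation-theoretic non-universality of $N(\mathcal{A})$ — are essentially routine once the key lemma is in hand.
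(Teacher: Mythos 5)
Your proof is correct and follows essentially the same route as the paper's: the key step in both is showing that every element of $\mathbf{L}_\downarrow$ lies in the span of $\mathbf{P}_\downarrow$ (because any logical Pauli term in the expansion of $\bar{L}$ has a representative supported within a minimum-weight representative of $\bar{L}$, forcing its distance to be $O(d)$), after which $\mathbf{B}$ normalises that span and cannot be universal. If anything, your final step is slightly more careful than the paper's: by passing to the $*$-subalgebra decomposition $\bigoplus_i V_i\otimes W_i$ and the structure of its normaliser, rather than to ``the subspace stabilised by $\mathbf{P}_\downarrow$'', you also cover the case where $\mathbf{P}_\downarrow$ is non-abelian and hence stabilises only the trivial subspace.
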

\begin{proof}
Any logical operator, $\bar{V}$, that is not expressible as a linear combination of elements of $\mathbf{P}_\downarrow$ must instead have expression in the logical Pauli basis that contains a term, $\bar{Q}$, outside $\mathbf{P}_\downarrow$. This implies that for any representative, $V$, of $\bar{V}$, there exists a representative, $Q$, of $\bar{Q}$ such that $\text{supp}(Q) \subseteq \text{supp}({V})$. Hence $d_{\bar{V}}\geq d_{\bar{Q}}$ and so $\bar{V}\not\in \mathbf{L}_\downarrow$. Thus, any element of $\mathbf{L}_\downarrow$ is expressible as a linear combination of elements of $\mathbf{P}_\downarrow$. Thus, $\mathbf{B}$ is a subgroup of the normaliser of the subspace of logical operators spanned by $\mathbf{P}_\downarrow$. Hence the subspace stabilised by $\mathbf{P}_\downarrow$ is invariant under the action of operators in $\mathbf{B}$. Since $\langle\mathbf{P}_{\downarrow}\rangle$ is a non-empty, proper subgroup of $\mathbf{P}_n$, this subspace is a non-trivial, proper subspace. Thus, $\mathbf{B}$ is not universal and so the code family is $\mathbf{B}$-constrained.
\end{proof}

\begin{lemma}\label{Lem4}
An infinite-disjointness, symmetric stabiliser code family is necessarily $\mathbf{B}$-constrained.
\end{lemma}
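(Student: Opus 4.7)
The plan is to show that $\mathbf{B}$ is contained in the logical Clifford group, which is famously not universal. Two observations drive the argument: symmetry forces $\langle \mathbf{P}_\downarrow \rangle$ to be the entire logical Pauli group $\mathbf{L} \cap \mathbf{P}_n$, while infinite disjointness forces $\mathbf{L}_\downarrow$ to consist only of logical Paulis. Any generator of $\mathbf{B}$ must then conjugate all logical Paulis into logical Paulis, hence lie in the logical Clifford group.

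First I would invoke symmetry. By definition there is a generating set $\mathbf{P}_g$ of the logical Pauli group with $d_{\bar{P}}/d$ bounded above by some constant for every $\bar{P} \in \mathbf{P}_g$. Directly from the definition of $\mathbf{L}_\downarrow$ this gives $\mathbf{P}_g \subseteq \mathbf{P}_\downarrow$, and hence $\langle \mathbf{P}_\downarrow \rangle$ contains the entire logical Pauli group.

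Next I would invoke infinite disjointness via the implication established in Appendix \ref{AppendixB} (and previewed in Sec.~\ref{Sec:IIIB}): for such code families, every logical non-Pauli operator $\bar{U}$ admits some logical Pauli $\bar{P}$ with $\lim_{l\to\infty} d_{\bar{U}}/d_{\bar{P}} = \infty$. Since $d_{\bar{P}} \geq d$ for every nontrivial logical Pauli, this yields $\lim_{l\to\infty} d_{\bar{U}}/d = \infty$, so $\bar{U} \notin \mathbf{L}_\downarrow$. Hence every element of $\mathbf{L}_\downarrow$ is (equivalent to) a logical Pauli operator.

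Combining these: any generator $\bar{B}$ of $\mathbf{B}$ conjugates every element of $\mathbf{P}_\downarrow$ into $\mathbf{L}_\downarrow$, which by the second step lies in the logical Pauli group; because $\mathbf{P}_\downarrow$ generates the full logical Pauli group by the first step, $\bar{B}$ normalises the logical Pauli group, so $\bar{B}$ is a logical Clifford operator. Since the Clifford group is closed under products, $\mathbf{B}$ sits inside the logical Clifford group, which is non-universal, so the code family is $\mathbf{B}$-constrained. The sole nontrivial ingredient is the infinite-disjointness implication, imported as a black box from Appendix \ref{AppendixB}; without it, the main obstacle would be the combinatorial step of relating disjointness to the distance of non-Pauli logical operators—everything else is straightforward bookkeeping about which logical subgroup contains which.
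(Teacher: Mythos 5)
Your overall architecture is the same as the paper's: show $\mathbf{L}_\downarrow$ contains only logical Paulis (via infinite disjointness), show $\mathbf{P}_\downarrow$ generates the full logical Pauli group (via symmetry), and conclude that every generator of $\mathbf{B}$ normalises the logical Pauli group and is therefore Clifford. The symmetry step and the final Clifford bookkeeping are correct. The problem is the step you import as a black box. The claim that every logical non-Pauli operator $\bar{U}$ satisfies $\lim_{l\to\infty} d_{\bar{U}}/d_{\bar{P}}=\infty$ for some logical Pauli $\bar{P}$ is not proved in Appendix \ref{AppendixB} (which only establishes that concatenated and conventional topological codes \emph{have} infinite disjointness); it is stated in Sec.~\ref{Sec:IIIB} with a pointer to Appendix \ref{AppendixSecIIIB}, i.e.~to the proof of this very lemma. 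So you have reduced the lemma to its single nontrivial ingredient and then cited that ingredient as already known — which, in context, is circular. You even identify it yourself as ``the sole nontrivial ingredient,'' which is exactly why it cannot be omitted.

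To close the gap you need the following argument (which is what the paper supplies). Since $\bar{U}$ is non-Pauli, it fails to commute up to phase with some logical Pauli $\bar{P}$, so the group commutator of their minimum-weight representatives $U$ and $P$ is a representative of some non-trivial logical operator $\bar{V}$; this commutator is supported on $\text{supp}(P)\cap\text{supp}(U)$, giving $|\text{supp}(P)\cap\text{supp}(U)|\geq d_{\bar{V}}\geq d$. The scrubbing lemma of Ref.~\cite{JochymOConnor} then bounds the overlap of any operator with a minimum-weight logical Pauli representative: $\Delta\,|\text{supp}(P)\cap\text{supp}(U)|\leq \text{wt}(U)=d_{\bar{U}}$. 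Combining the two inequalities yields $d_{\bar{U}}\geq \Delta\, d$, so $d_{\bar{U}}/d\geq\Delta\to\infty$ and $\bar{U}\notin\mathbf{L}_\downarrow$. Without some such quantitative link between disjointness and the distance of non-Pauli logical operators, the lemma does not follow; with it, the rest of your write-up goes through as stated.
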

\begin{proof}
Let $\bar{U}$ be a non-Pauli logical operator. Then, there exists logical Pauli $\bar{P}$ such that $[\bar{U},\bar{P}]$ is not a scalar multiple of $\bar{I}$, which implies that for the smallest-support representative $P$ of $\bar{P}$ and $U$ that is the smallest-support representative of $\bar{U}$
\begin{equation}\label{Lem4Eq1}
|\text{supp}(P)\cap \text{supp}(U)|\geq\left|\text{supp}\left([U,P]\right)\right|\geq d_{\bar{V}}
\end{equation}
for some logical operator $\bar{V}$.

By the scrubbing lemma \cite{JochymOConnor},
\begin{equation}
\Delta|\text{supp}(P)\cap \text{supp}(U)|\leq |\text{supp}(U)|
\end{equation}
Rearranging and using that $d_{\bar{U}}=|\text{supp}(U)|$ and Eq.~(\ref{Lem4Eq1}), this implies
\begin{equation}\label{Lem4Eq3}
\frac{d_{\bar{U}}}{d_{\bar{V}}} \geq \Delta
\end{equation}
Since the code family is assumed to be infinite-disjointness, then taking the limit of Eq.~(\ref{Lem4Eq3}) implies
\begin{equation}
\lim_{l\to\infty} \frac{d_{\bar{U}}}{d_{\bar{V}}} =\infty
\end{equation}
Thus, $\bar{U}\not\in \mathbf{L}_\downarrow$. 

Hence, $\mathbf{L}_\downarrow$ is a subset of the logical Pauli group and so also there is some non-trivial logical Pauli $\bar{P}$ in $\mathbf{L}_\downarrow$. Since the code family is assumed to be symmetric, there is some generating set $\mathbf{P}_g$ of the logical Pauli group such that $\lim_{l\to\infty} \frac{d_{\bar{Q}}}{d_{\bar{P}}}\neq\infty$ for any $\bar{Q}\in\mathbf{P}_g$. Thus, $\mathbf{P}_g\subseteq \mathbf{L}_\downarrow$ and so all elements of $\left\{\bar{B}\in\mathbf{L}\middle| \bar{B}\bar{P}\bar{B}^\dag \in \mathbf{L}_\downarrow \,\, \forall \bar{P}\in\mathbf{P}_\downarrow \right\}$ are logical Clifford operators. Thus, $\mathbf{B}$ is contained in the logical Clifford group, which is not universal.
\end{proof}

\section{Disjointness of Code Families}\label{AppendixB}
The disjointness is a measure of how many mostly disjoint representatives of logical Pauli operators a stabiliser code has \cite{JochymOConnor}. Specifically, the $1$-disjointness is the largest integer $\Delta_1$ such that each logical Pauli operator admits at least $\Delta_1$ representatives with mutually disjoint support (i.e.~such that no qubit is in the support of more than one of the representatives for each logical operator). The disjointness of a code is lower-bounded by its $1$-disjointness. In particular, for a code to be infinite-disjointness it is sufficient that $\lim_{l\to\infty} \Delta_1(l)=\infty$.

The disjointness is a generalisation of the $1$-disjointness that also accounts for almost disjoint representatives. Specifically, the unnormalised $c$-disjointness (for $c\geq 1$) is the largest integer, $\Delta_c'$, such that each logical Pauli operator admits at least $\Delta_c'$ representatives that are $c$-disjoint, i.e.~such that no qubit is in the support of more than $c$ of the representatives for each logical operator. The (normalised) $c$-disjointness is the unnormalised $c$-disjointness divided by $c$: $\Delta_c=c^{-1} \Delta_c'$. The disjointness of the code is the maximum value taken by the $c$-disjointness for any $c$:
\begin{equation}
\Delta=\max_{c\in \mathbb{Z}_+} \Delta_c
\end{equation}

In this appendix, we show that concatenated stabiliser code families and conventional topological stabiliser codes families necessarily have infinite disjointness. In each case, we also comment on possible extensions.

\subsection{Concatenated Codes}\label{AppendixB1}
Concatenation of quantum-error correcting codes is a natural and effective technique for constructing code families. Specifically, for any $[[n,1,d]]$ quantum error-correcting code, $\mathcal{C}$, there exists a code family such that $\mathcal{C}_l$ is an $[[n^l,1,d^l]]$ code constructed by the $l$-fold concatenation of $\mathcal{C}$. More generally, we may concatenate different codes -- by concatenating a sequence of $[[n_i,1,d_i]]$ codes, we can construct a code family such that $\mathcal{C}_l$ is a $[[\prod_{i=1}^l n_i,1,\prod_{i=1}^l d_i]]$ code constructed by the concatenation of the first $l$ codes in the sequence. (We assume here that $n_i$ is bounded in $i$, since the idea is that the codes used in the concatenation should be small codes that are used as building blocks for the large concatenated code.) The quantum threshold theorem implies that concatenated code families indeed have a threshold against local stochastic noise \cite{Aharanov,Aliferis}.

In order to prove that concatenated stabiliser codes (i.e.~concatenated code families where all codes in the concatenation are stabiliser codes) have infinite-disjointness, we first prove the following lemma showing that the disjointness of a code constructed by concatenating a pair of codes is at least the product of the disjointnesses of the two codes.
\begin{lemma}\label{Lem6}
Let $\Delta$ be the disjointness of a concatenated code constructed by encoding the physical qubits of code 1, with disjointness $\delta_1$ into code 2, with disjointness $\delta_2$. Then, $\Delta \geq \delta_1\delta_2$.
\end{lemma}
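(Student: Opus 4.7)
The plan is to construct explicit almost-disjoint representatives of logical Pauli operators of the concatenated code by combining the almost-disjoint representatives guaranteed by the disjointnesses of the two constituent codes. Using the definition of disjointness as a maximum over $c$ of normalised $c$-disjointness, I would first fix values $c_1$ and $c_2$ that witness $\delta_1$ and $\delta_2$ respectively, so that each logical Pauli of code $i$ admits at least $c_i \delta_i$ representatives that are $c_i$-disjoint.

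Next, for any logical Pauli $\bar{P}$ of the concatenated code, I would construct a family of $(c_1\delta_1)(c_2\delta_2)$ representatives indexed by pairs $(a,b)$. The representative indexed by $(a,b)$ is built by taking the $a$-th representative $P_1^{(a)}$ of the induced code-1 logical Pauli, and then, for each outer qubit on which $P_1^{(a)}$ acts nontrivially by some $\sigma\in\{X,Y,Z\}$, replacing that qubit with the $b$-th representative of $\bar{\sigma}$ in the code-2 block encoding it. This yields a valid representative of $\bar{P}$ because the replacement within each block is logically equivalent to applying the bare logical operator.

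The disjointness of this family is then bounded as follows. A physical qubit $q$ sitting in block $i$ lies in the support of the representative indexed by $(a,b)$ if and only if the outer qubit $i$ lies in the support of $P_1^{(a)}$ \emph{and} $q$ lies in the support of the chosen $b$-th code-2 representative for that block. The first condition is met by at most $c_1$ values of $a$ by $c_1$-disjointness of the code-1 representatives, and the second by at most $c_2$ values of $b$ by $c_2$-disjointness of the code-2 representatives. Hence the constructed family is $(c_1 c_2)$-disjoint, giving $\Delta'_{c_1 c_2}(\text{concat}) \geq c_1 c_2 \delta_1 \delta_2$ and therefore $\Delta \geq \Delta_{c_1 c_2}(\text{concat}) \geq \delta_1 \delta_2$, as required.

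The main subtlety to handle is that a single code-1 representative $P_1^{(a)}$ may act by different Pauli types on different outer qubits, so the construction requires $c_2\delta_2$ almost-disjoint representatives to exist simultaneously for each of $\bar{X}$, $\bar{Y}$, $\bar{Z}$ in code 2 at the \emph{same} value of $c_2$. This is already guaranteed by the definition of disjointness as a minimum over all logical Pauli operators, so no additional argument is needed; beyond this the proof is essentially combinatorial bookkeeping once the pairing construction is written down.
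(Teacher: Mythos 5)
Your construction is correct and is essentially the paper's own proof: both fix witnessing values $c_1,c_2$, build $(c_1\delta_1)(c_2\delta_2)$ concatenated representatives by pairing an outer (code-1) representative with a choice of inner (code-2) representatives, and observe that any physical qubit can lie in the support of at most $c_1$ outer choices times $c_2$ inner choices, giving $c_1c_2$-disjointness and hence $\Delta\geq\delta_1\delta_2$. Your explicit remark that the $c_2$-disjoint families must exist simultaneously for all of $\bar{X},\bar{Y},\bar{Z}$ at the same $c_2$ is a point the paper leaves implicit, and it is indeed covered by the definition of $c$-disjointness as a property holding for every logical Pauli.
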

\begin{proof}
Logical operator representatives of the concatenated code are constructed by specifying a representative of the logical operator on code $1$, and then, for each non-trivial Pauli operator in this representative, specifying a representative of the corresponding logical operator on code $2$. For each of the codes, $i=1,2$, there exists $c_i$ such that there is a set of $c_i\delta_i$ $c_i$-disjoint representatives of each logical Pauli operator of code $i$. Using combinations of logical operator representatives from these sets, we can construct a set of $c_1c_2\delta_1\delta_2$ representatives of any logical operator on the concatenated code. Each physical qubit is in the support of one of these representatives only if the chosen code $1$ logical operator representative is one of the at most $c_1$ with support on the block that contains this qubit and the chosen code $2$ logical operator representative is one of the at most $c_2$ with support on the corresponding qubit within the block. Thus, each qubit is in the support of at most $c_1c_2$ of the $c_1c_2\delta_1\delta_2$ logical operator representatives. Hence,
\begin{equation}
\Delta \geq \frac{c_1c_2\delta_1\delta_2}{c_1c_2}= \delta_1\delta_2
\end{equation}
\end{proof}

We now prove that concatenated codes are necessarily infinite-disjointness.
\begin{theorem}\label{Th3}
Concatenated stabiliser codes are infinite-disjointness
\end{theorem}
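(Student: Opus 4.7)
The plan is to iterate Lemma 6 to reduce the problem to a lower bound on the disjointness of each component code. By induction on $l$, applying Lemma 6 with $\mathcal{C}_{l-1}$ playing the role of code~1 and $\mathcal{C}_l$ the role of code~2, I would obtain
\begin{equation}
\Delta(\mathcal{C}_l) \geq \prod_{i=1}^{l} \delta_i,
\end{equation}
where $\delta_i$ denotes the disjointness of the $i$-th component code. The base case $l=1$ is definitional, and the inductive step is exactly the content of Lemma 6.

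Next, I would argue that infinitely many of the $\delta_i$ are bounded below by a constant strictly greater than $1$. Since the concatenated family must have growing distance $d(l)=\prod_{i=1}^l d_i$ in order to count as a code family with a threshold, infinitely many of the component codes must be genuinely error-correcting, with $d_i\geq 2$. The key substantive step is to show that any $[[n,1,d]]$ stabiliser code with $d\geq 2$ has disjointness strictly greater than $1$. This can be argued by noting that the stabiliser group of such a code is nontrivial, so each logical Pauli admits several distinct representatives obtained by stabiliser multiplication; a counting argument at an appropriate overlap parameter $c$ then produces strictly more than $c$ representatives that are $c$-disjoint, yielding $\delta\geq\delta_c>1$. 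Because the $n_i$ are bounded in $i$, the component codes are drawn from a finite pool (up to relabelling of qubits), so the resulting bound $\delta_i\geq c_0>1$ is uniform across all non-trivial levels.

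Combining the two ingredients gives $\Delta(\mathcal{C}_l)\geq\prod_{i=1}^l\delta_i\geq c_0^{N(l)}$, where $N(l)$ is the number of non-trivial levels among the first $l$ and tends to infinity with $l$. Hence $\lim_{l\to\infty}\Delta(\mathcal{C}_l)=\infty$ and the concatenated family is infinite-disjointness, completing the proof.

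The main obstacle I anticipate is the second step: establishing uniformly that small stabiliser codes with $d\geq 2$ have disjointness bounded away from $1$. Since the upper bound $\Delta\leq n/d$ means that small perfect codes such as $[[5,1,3]]$ can have disjointness only marginally above $1$, the argument must select the overlap parameter $c$ carefully and exploit the boundedness of the $n_i$ to prevent an accumulation $\delta_i\to 1^+$ that could cause the product $\prod\delta_i$ to remain bounded. Handling levels where $d_i=1$ is easier, as these contribute neither to the code distance nor to the disjointness growth and can be absorbed into a trivial relabelling of the concatenation sequence.
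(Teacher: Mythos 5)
Your proposal is correct and follows essentially the same route as the paper: iterate Lemma \ref{Lem6} to get $\Delta(\mathcal{C}_l)\geq\prod_i\delta_i$, establish a uniform lower bound $c_0>1$ on the component disjointnesses using the boundedness of the $n_i$, and conclude exponential growth. The only cosmetic difference is that the paper obtains uniformity via the explicit quantisation bound $\Delta\geq 2^{n-k}/(2^{n-k}-1)$ (citing $\Delta>1$ for non-trivial codes from Ref.~\cite{JochymOConnor}) rather than your finite-pool argument, and it implicitly assumes all component codes are non-trivial rather than tracking $d_i=1$ levels separately.
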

\begin{proof}
The disjointness is a ratio of cardinalities of sets of representatives of logical Pauli operators, which are necesarily upper-bounded by the number of representatives of each logical Pauli operator (which is equal to the size of the stabiliser group, $2^{n-k}$). Thus, since the disjointness of any non-trivial code satisfies $\Delta>1$ \cite{JochymOConnor}, it follows that the disjointness is bounded by $\Delta\geq \frac{2^{n-k}}{2^{n-k}-1}$. Since each code used in the concatenation has a number of physical qubits, $n$, bounded by some constant $N$, this implies that there exists a constant $C=\frac{2^{N-k}}{2^{N-k}-1}>1$ such that the disjointness of each code used in the concatenation is at least $C$.

By induction, the disjointness, $\delta_l$, of the code after $l$ levels of concatenation thus satisfies $\delta_l\geq C^l$. Indeed, this is trivial for $l=1$ and, assuming that it is true for $l=m-1$, it follows by Lemma \ref{Lem6} that 
\begin{equation}
\delta_l \geq \delta_{l-1}\delta\geq C^{l-1}C \geq C^l
\end{equation}
where $\delta$ is the disjointness of the $l$th code in the concatenation.

Thus, the disjointness of the concatenated code, $\Delta(l)=\delta_l$ satisfies
\begin{equation}
\lim_{l\to\infty} \Delta(l) \geq \lim_{l\to\infty} C^l=\infty
\end{equation}
\end{proof}

By Theorem \ref{Th2} and Corollary \ref{Cor1} respectively, we thus have the following no-go results for concatenated codes.
\begin{corollary}
No concatenated stabiliser code family can admit a universal set of bounded-spread logical operator implementations.
\end{corollary}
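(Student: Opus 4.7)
The plan is to chain together the results already established in the excerpt so that the corollary follows by a short deductive argument rather than by any fresh combinatorial work. The first step is to invoke Theorem \ref{Th3}, which guarantees that every concatenated stabiliser code family has infinite disjointness, i.e.\ $\lim_{l \to \infty} \Delta(l) = \infty$. This places concatenated families inside the regime where the structural results of Appendix \ref{AppendixSecIIIB} apply, so I would not need to inspect the internal structure of particular concatenated codes at all.

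Next I would dispose of the symmetric/asymmetric dichotomy explicitly. Given an arbitrary concatenated stabiliser code family, either it is asymmetric, in which case Lemma \ref{Lem3} immediately yields that it is $\mathbf{B}$-constrained, or it is symmetric, in which case Lemma \ref{Lem4} combined with the infinite-disjointness established in the previous step yields the same conclusion. In both branches the family is $\mathbf{B}$-constrained, which is exactly the hypothesis required by Theorem \ref{Th2}.

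Finally I would apply Theorem \ref{Th2} to the $\mathbf{B}$-constrained family to conclude that no universal set of bounded-spread logical operator implementations can exist on it. Schematically the deduction is
\begin{equation}
\text{concatenated} \;\Rightarrow\; \text{infinite disjointness} \;\Rightarrow\; \mathbf{B}\text{-constrained} \;\Rightarrow\; \text{no universal bounded-spread set.}
\end{equation}

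I do not anticipate a genuine obstacle here, since all the conceptual content has already been discharged in Theorem \ref{Th3} (the combinatorial heart of the argument, via Lemma \ref{Lem6}) and in the $\mathbf{B}$-constrained lemmas of Appendix \ref{AppendixSecIIIB}. The only care needed is to ensure that the asymmetric case is not inadvertently excluded: concatenating an asymmetric building block (for instance a code with unequal $\bar{X}$ and $\bar{Z}$ distances) could in principle produce an asymmetric family, so the proof must handle both branches rather than assuming symmetry. Once that bookkeeping is in place, the corollary is a one-line consequence of the preceding theorems.
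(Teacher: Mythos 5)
Your proposal is correct and follows exactly the paper's route: Theorem \ref{Th3} gives infinite disjointness for concatenated families, the asymmetric/symmetric dichotomy (Lemmas \ref{Lem3} and \ref{Lem4}) shows every infinite-disjointness family is $\mathbf{B}$-constrained, and Theorem \ref{Th2} then delivers the no-go. The paper's own derivation is this same one-line chain, so there is nothing to add.
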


\begin{corollary}
{ A universal set of logical operators cannot be implemented on any concatenated stabiliser code by local noise preserving quantum channels.}
\end{corollary}

These results can also be generalised to concatenated subsystem codes by applying the same arguments to disjointness of bare logical operators (as discussed in Sec.~\ref{Sec:IIIB}).

We also note that concatenated codes are a special case of tensor network codes \cite{Farrelly} that have a tree structure. This tree structure straightforwardly gives rise to the property of having infinite disjointness -- increasing $l$ corresponds to growing the tree which increases the number of indepenent blocks on which logical operator representatives can act, implying that the number of (almost) disjoint such representatives grows exponentially in $l$. An interesting question for future work concerns the disjointness of other classes of tensor network codes, especially holographic codes \cite{Pastawski2,Harris1,Harris2}. Holographic codes have a similar structure to concatenated codes, but correspond to more complex tensor networks in which a physical qubit can be dependent on multiple tensors closer to the centre of the network. This dependence makes it less clear whether sufficiently many almost disjoint logical operator representatives can necessarily be constructed in holographic codes for them to have infinite disjointness. We note that if they, or other classes of tensor network codes, do not necessarily have infinite disjointness then they could be promising candidates to circumvent our results and could possibly realise a universal set of LNP implementations of logical operators.

\subsection{Conventional Topological Stabiliser Codes}\label{AppendixB2}
Topological codes naturally correspond to code families. Specifically, a topological stabiliser code is defined by a stabiliser group generated by operators that are geometrically local on a lattice in $D\geq 2$ dimensions. This group must be chosen such that all logical operators have support on a non-local region, meaning a region of at least the size of the shortest path across the lattice which we define as $l$.
This defines a code family as logical error rates can be arbitrarily suppressed by increasing $l$.

Within the class of topological stabiliser codes, we can identify a subclass we refer to as \emph{conventional topological stabiliser codes}. This subclass includes all two-dimensional topological stabiliser codes, all scale and translationally symmetric (STS) codes \cite{Yoshida} including the surface and colour codes in all spatial dimensions, and all topological stabiliser codes with defects that use the standard encoding as defined in Ref.~\cite{WebsterBraid}. Conventional topological stabiliser codes are topological stabiliser codes that have conventional topological order \cite{Dua}, meaning that they can be associated with topological quantum field theories with topological excitations that correspond to error syndromes and which can propagate freely through the code. Logical Pauli operators on conventional topological stabiliser codes have representatives that correspond to topologically equivalent paths of an excitation through the code, such as paths around a handle of a torus, around a topological defect, or between two code boundaries.

Conventional topological stabiliser codes necessarily have infinite disjointness, as we show below. This implies that, by Theorem \ref{Th2} and Corollary \ref{Cor1} respectively, we have the following no-go results for conventional topological stabiliser codes.
\begin{corollary}
No conventional topological stabiliser code family can admit a universal set of bounded-spread logical operator implementations.
\end{corollary}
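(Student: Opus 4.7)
The plan is to reduce the corollary to the infinite-disjointness property of conventional topological stabiliser codes, which is the content the appendix still needs to supply before invoking earlier results. Once infinite disjointness is in hand, the corollary follows automatically: by Lemma \ref{Lem3} any asymmetric stabiliser code family is $\mathbf{B}$-constrained regardless of disjointness, while by Lemma \ref{Lem4} any symmetric infinite-disjointness stabiliser code family is $\mathbf{B}$-constrained. Since every code family is either symmetric or asymmetric, infinite disjointness forces $\mathbf{B}$-constrainment, and Theorem \ref{Th2} then rules out a universal set of bounded-spread logical operator implementations.

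The substantive step is therefore to lower-bound the disjointness $\Delta(l)$ of an arbitrary conventional topological stabiliser code on a lattice of linear size $l$. I would work with the $1$-disjointness $\Delta_1$, which is enough because $\Delta \geq \Delta_1$. The defining feature of conventional topological order, as quoted from Ref.~\cite{Dua}, is that each logical Pauli operator has representatives corresponding to topological paths of excitations, and that such excitations can be freely propagated through the bulk of the code by multiplication by stabilisers. Concretely, I would fix a minimum-weight representative $P$ of an arbitrary logical Pauli operator $\bar{P}$, identify its support with a non-contractible topological object (a string in $2$D codes, a membrane or higher-dimensional cycle in $D \geq 3$), and then produce a family of representatives $\{P^{(j)}\}_{j=1}^{m(l)}$ obtained by rigidly translating $P$ in a transverse direction. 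Free propagation guarantees $P^{(j)} P^{(j')^{-1}} \in \mathcal{S}$ so that all $P^{(j)}$ represent the same $\bar{P}$, and the transverse direction has extent $\Theta(l)$, so the translations can be spaced to yield mutually disjoint supports, giving $\Delta_1(l) \geq m(l) = \Omega(l) \to \infty$.

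The main obstacle is carrying out this construction uniformly across the full class of ``conventional'' topological stabiliser codes, which in the excerpt is a broad umbrella covering $2$D topological codes, STS codes in all dimensions, and codes with defects under the standard encoding. I would handle the three cases separately: for STS codes, translational symmetry of the stabiliser group directly supplies the required translates of $P$; for general $2$D topological stabiliser codes I would invoke the classification of Ref.~\cite{Bombin2} to reduce to surface or colour codes on surfaces with handles/boundaries, where string-like representatives admit $\Theta(l)$ parallel copies along any non-trivial cycle; for codes with defects I would use paths encircling a defect at varying radii, again with $\Theta(l)$ disjoint choices before the radius becomes comparable to the defect spacing. In each case the topological-excitation picture provides the stabiliser equivalence between translated copies, so the only work is counting disjoint translates.

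Having established $\lim_{l\to\infty} \Delta(l) = \infty$, I would state the corollary's proof as a single paragraph: infinite disjointness plus Lemmas \ref{Lem3} and \ref{Lem4} gives $\mathbf{B}$-constrainment, and Theorem \ref{Th2} then denies a universal set of bounded-spread logical operator implementations. I would also remark that the same argument, combined with Corollary \ref{Cor1}, immediately yields the unitary-implementation no-go statement that the appendix presents next, and that extending infinite disjointness beyond conventional topological order (e.g.\ to fracton codes or general LDPC codes) is left open precisely because the free-propagation property fails or is unclear there.
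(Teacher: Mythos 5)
Your high-level reduction is exactly the paper's: establish that conventional topological stabiliser codes have infinite disjointness, note that infinite disjointness forces $\mathbf{B}$-constrainment via Lemma \ref{Lem3} (asymmetric case) or Lemma \ref{Lem4} (symmetric case), and conclude by Theorem \ref{Th2}. The gap is in the substantive step. You propose to lower-bound the $1$-disjointness by rigidly translating a minimum-weight representative of each logical Pauli into $\Theta(l)$ mutually disjoint copies. This fails for logical Pauli operators whose representatives are composites of several topologically inequivalent path classes. The paper's own example is $\bar{Y}$ on the 3D surface code: every representative contains a membrane-like ($\bar{X}$-type) piece and a string-like ($\bar{Z}$-type) piece, and the membrane of one translate necessarily intersects the string of any other translate, so no two such representatives are disjoint and $\Delta_1$ does not grow with $l$. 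A similar issue arises for branching string-net representatives in the triangular colour code. This is precisely why the paper does not work with $\Delta_1$ but with the $c$-disjointness for $c$ equal to the number of encoding generating path classes (a constant in $l$): it proves via the cleaning lemma (Lemma \ref{LemTSC}) that each encoding path class has no bottleneck and hence admits $d$ disjoint representative paths, then combines one path per class to get $d$ representatives of $\bar{P}$ in which each qubit appears at most $C$ times, giving $\Delta \geq d/C \to \infty$.

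Two further remarks. First, the cleaning-lemma argument is also what lets the paper avoid your case split (STS codes via translation symmetry, 2D codes via classification, defect codes via concentric paths): bottleneck-freeness follows uniformly for any encoding path class, with the only global assumption being that the number of generating path classes is constant in $l$. Second, your appeal to ``free propagation guarantees $P^{(j)} (P^{(j')})^{-1} \in \mathcal{S}$'' is the right intuition but is doing real work that should be justified; in the paper this is packaged into the definition of conventional topological order (representatives correspond to topologically equivalent paths of excitations), so deformed paths in the same class are automatically stabiliser-equivalent. To repair your proof, replace the $\Delta_1$ bound by a $c$-disjointness bound with $c$ the number of generating path classes, and justify disjointness of the translates per path class (rather than per full Pauli representative) using the absence of bottlenecks.
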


\begin{corollary}
{ A universal set of logical operators cannot be implemented on any conventional topological stabiliser code by local noise preserving quantum channels.}
\end{corollary}

While the proof of infinite disjointness for general conventional topological stabiliser codes that follows is quite technical, we note that it is straightforward to see for a number of the most common instances. Indeed, in Fig.~\ref{Fig:IDTSC} we show visually how some of these instances can easily be seen to have infinite disjointness.

We also note that it is unclear whether more general topological stabiliser codes necessarily have infinite disjointness. The property for conventional topological stabiliser codes emerges from their TQFT structure; increasing $l$ can be understood to be like fine-graining the model to allow for a greater number of topologically equivalent paths of excitations. This generalises to encodings on any topological stabiliser code that depend only on \emph{topological excitations}, as in the defect encodings considered in Ref.~\cite{WebsterBraid}. However, more general topological stabiliser codes (in $D\geq 3$ dimensions) can have more exotic excitation structures integral to their encoding \cite{Dua}, which makes it unclear whether they necessarily must also have infinite disjointness. In particular, it is known that the locality-preserving logical operators admitted by such codes cannot be universal \cite{Bravyi}, but if finite-disjointness topological stabiliser codes are possible it could nonetheless be the case that more general LNP implementations of logical operators could be sufficient for universality. This is an interesting question for future work.

\begin{figure*}
\centering
\subfigure[]{\label{fig:a}
\includegraphics[scale=0.9]{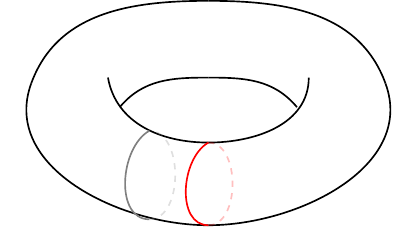}}
\subfigure[]
{\label{fig:b}
\includegraphics{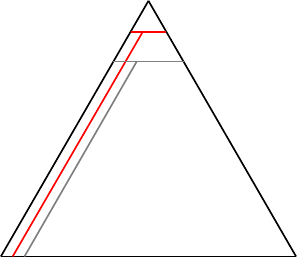}
}
\subfigure[\label{3DSC}]{
\includegraphics{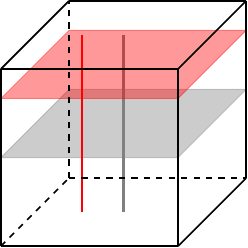}
}
\caption{Illustrations of infinite disjointness of examples of conventional topological stabiliser codes. \newline(a) Toric code: Logical operator representatives correspond to topologically non-trivial loops around the torus. Specifically, $\bar{X}_1$ and $\bar{Z}_2$ are vertically oriented loops, $\bar{X}_2$ and $\bar{Z}_1$ are horizontally oriented loops and products such as $\bar{Y}_1$ and $\bar{Y}_2$ are diagonally oriented loops (which go around both holes). Disjoint representatives of each logical operator correspond to parallel loops around the torus (e.g. translating the red loop gives a disjoint, topologically equivalent grey loop). The disjointness of the code is thus lower-bounded by the circumference around the torus, which is grows with $l$. \newline (b) Colour code: Logical operator representatives correspond to topological excitations that traverse to all three boundaries. Labelling the height of the triangle, $l$, we can construct $l$ representatives (of the kind shown in red and grey) such that each qubit is in the support of at most two of the representatives). The disjointness is thus bounded by $\frac{l}{2}$ which goes to infinity \newline(c) 3D surface code: $\bar{Y}$ logical operator representatives correspond to the paths of point-like charge between top and bottom faces and a line-like flux stretching from front to back passing from left face to right. We can construct $l$ representatives (of the kind shown) so that each qubit is only in the support of at most two of the representatives. Since the $\bar{X}$ and $\bar{Z}$ operators are simply planes and columns and so have $l$ disjoint representatives, this implies that the disjointness of the code is at least $\frac{l}{2}$. \label{Fig:IDTSC}}
\end{figure*}

\subsubsection{Proof of Infinite Disjointness}
A code on a given lattice has a set of paths through the lattice. We allow such paths to have discontinuities, provided they are locally continuous. We define two paths to be topologically equivalent if one can be deformed into the other. We thereby define equivalence classes of paths, $\alpha_i$, such that two paths are in the same class iff they are topologically equivalent. Such classes of paths can be defined by, for example, the boundaries they pass between or the defects (such as holes or twists) that they enclose. We define a notion of product of classes whereby $\alpha_{ij}=\alpha_i\alpha_j$ denotes the equivalence class of paths that are a representative path of $\alpha_i$ followed by a representative path of $\alpha_j$.  We can specify an independent generating set of classes of paths, $A=\{\alpha_j\}$ with respect to this product.

A logical Pauli operator, $\bar{P}$ on a conventional topological stabiliser code can be defined by associating a list of excitations of the TQFT, $E_{\bar{P}}=\{p_j\}$, with each class of paths $\alpha_j\in A$. By this, we mean that that representatives of $\bar{P}$ are compositions of $p_j$ traversing a representative path of $\alpha_j$, for all $j$. We refer to a generating path class $\alpha_j\in A$ as \emph{encoding} if there is some $\bar{P}$ for which $p_j$ is not trivial (i.e.~such that an excitation that is not the vacuum must traverse a path in $\alpha_j$ to implement $\bar{P}$). The following lemma shows that encoding generating path classes necessarily have as many disjoint representatives as the code distance (which is necessarily extensive in $l$).
\begin{lemma}\label{LemTSC}
Let $\alpha_j\in A$ be an encoding generating path class on a code with distance $d$. Then, there exist at least $d$ disjoint representatives of $\alpha_j$.
\end{lemma}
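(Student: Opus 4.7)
The plan is to prove the lemma via a max-flow/min-cut argument adapted to the topological structure of the code. First, I would associate with the class $\alpha_j$ an auxiliary graph $G_j$ whose edges are indexed by the physical qubits on which the excitation $p_j$ acts non-trivially, with vertices corresponding to the plaquettes/cells of the lattice and with distinguished endpoint-sets corresponding to the topological features (boundaries, handles, or defects) that define $\alpha_j$. The free-propagation property guaranteed by conventional topological order implies that any two topologically equivalent paths in $\alpha_j$ differ by the boundary of a cycle in $G_j$, equivalently by multiplication by a stabiliser; so representatives of $\alpha_j$ correspond precisely to paths between the distinguished endpoints in $G_j$, and the notion of disjointness in the lemma coincides with edge-disjointness in $G_j$.

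Second, I would apply Menger's theorem to $G_j$ to conclude that the maximum number of edge-disjoint representatives of $\alpha_j$ equals the size of the minimum edge cut separating the distinguished endpoints. The lemma then reduces to showing that this minimum cut has size at least $d$.

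Third, I would bound the min cut below by the distance by contradiction. Suppose there is a cut $C \subseteq G_j$ with $|C| < d$. Because $\alpha_j$ is encoding, there is a logical operator $\bar{P}$ whose list of excitations has $p_j$ non-trivial. For any representative of $\bar{P}$, the component traversing $\alpha_j$ must cross $C$. Using the free-propagation property on each side of $C$, the portions of that component lying entirely on one side of the cut can be absorbed into stabiliser multiplications, leaving a representative whose $\alpha_j$-support is contained in $C$. Choosing the other generating-class components to be their minimum-weight representatives independently, we obtain a representative of $\bar{P}$ of weight less than $d$ on $\alpha_j$ plus a fixed contribution from the other classes; since $\bar{P}$ could have been chosen with those other contributions empty (or replaced by a related encoding operator supported only on $\alpha_j$), this contradicts the definition of code distance.

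The main obstacle will be step three: rigorously showing that one can collapse the $\alpha_j$-component onto the cut set $C$ without spawning additional excitations elsewhere and without coupling to other generating-class contributions in a way that forces extra weight. Formalising this requires careful use of the TQFT structure of conventional topological codes — specifically that the relevant excitations fuse consistently and that free propagation extends to arbitrary deformations of the path, not just translations. A robust alternative, which I would fall back on if the collapse argument proves delicate in general, is a direct constructive approach: iteratively apply stabilisers to produce a maximal sequence of parallel disjoint representatives, and argue that any obstruction to extending the sequence localises a dual separator whose weight lower-bounds the distance of the dual logical operator associated with $\bar{P}$, thereby yielding at least $d$ copies before obstruction.
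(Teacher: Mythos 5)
Your first two steps (representatives of $\alpha_j$ as paths in an auxiliary graph, and Menger's theorem turning ``fewer than $d$ disjoint representatives'' into a cut of fewer than $d$ qubits through which every representative must pass) match the paper, which simply asserts this duality by calling the cut a ``bottleneck''. The genuine gap is in your third step. You try to derive a contradiction with the code distance by \emph{concentrating} the $\alpha_j$-component of a logical representative onto the cut $C$, so as to exhibit a logical operator of weight less than $d$. This fails for the reason you half-acknowledge: a representative of $\bar{P}$ is a composition of excitations over \emph{all} encoding path classes, and there is no guarantee that an encoding class $\alpha_j$ admits a logical operator whose excitation list is non-trivial \emph{only} on $\alpha_j$. (The $\bar{Y}$ operator of the 3D surface code in the paper's Fig.~3(c) is exactly such a composite: its charge-path and flux-sheet components are not separately logical.) So ``weight less than $d$ on $\alpha_j$ plus a fixed contribution from the other classes'' does not contradict the distance, and your parenthetical escape hatch (``$\bar{P}$ could have been chosen with those other contributions empty'') is an unproven assumption. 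Your fallback construction is too vague to close this.

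The paper's argument runs the support manipulation in the \emph{opposite} direction and avoids the issue entirely: a set of fewer than $d$ qubits is correctable, so by the cleaning lemma of Bravyi and Terhal every logical operator has a representative with \emph{no} support on the bottleneck. But any representative whose excitation on $\alpha_j$ is non-trivial must cross the bottleneck, so the cleaned representative has trivial $p_j$; hence every $\bar{P}$ has trivial $p_j$ and $\alpha_j$ is not encoding --- a direct contradiction with the hypothesis, with no need to isolate $\alpha_j$'s contribution to the weight. If you replace your step three with this cleaning-lemma step, your proof goes through and is in fact slightly more explicit than the paper's about the Menger duality (though note that for extended excitations such as flux sheets the ``paths'' are higher-dimensional, so the graph-theoretic Menger statement needs the same informal generalisation the paper tacitly makes).
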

\begin{proof}
Assume that there are not $d$ disjoint representatives. Then, there exists some region of the code that is a \emph{bottleneck} for $\alpha_j$, by which we mean that there is a set of less than $d$ qubits on which all representatives must have support. However, by the cleaning lemma \cite{Bravyi5}, we can necessarily clean all logical operators off this bottleneck so that it acts trivially there. Thus, no logical operator can have a non-trivial excitation on $\alpha_j$, and so $\alpha_j$ is not an encoding path.
\end{proof}

We now prove that conventional topological stabiliser codes necessarily have infinite disjointness. To do so, we make the assumption that the number of independent generating path classes is constant in $l$, which is true if the genus of the mainfold and the number of boundaries and defects are all constant. This assumption is justified by the fact that code families are intended to provide improved protection on a given topology by increasing code size, rather than altering this topology.
\begin{theorem}
Conventional topological stabiliser codes are infinite-disjointness.
\end{theorem}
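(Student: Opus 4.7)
The plan is to combine Lemma \ref{LemTSC} with the assumption that the number $|A|$ of independent generating path classes is constant in $l$, in order to exhibit, for every non-trivial logical Pauli operator $\bar{P}$, a family of $d$ Pauli representatives of $\bar{P}$ in which each physical qubit appears in at most $|A|$ of them. Since the code distance $d$ grows with $l$ for any topological stabiliser code (every logical operator must cross the lattice, so $d \geq l$ up to constants), this bounds the disjointness below by $d/|A|$, which diverges with $l$.

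First, I would restrict attention to the subset $A_{\mathrm{enc}} \subseteq A$ of encoding generating path classes. A non-encoding class carries only trivial excitations for every $\bar{P}$ and thus never contributes to the support of any representative, so it may be dropped; the restricted set still satisfies $|A_{\mathrm{enc}}| \leq |A|$, a constant in $l$. For each $\alpha_j \in A_{\mathrm{enc}}$, Lemma \ref{LemTSC} supplies at least $d$ pairwise disjoint representative paths $\gamma_j^{(1)}, \ldots, \gamma_j^{(d)}$. Given $\bar{P}$ with excitation list $E_{\bar{P}} = \{p_j\}$, I would define the $k$-th representative $P^{(k)}$ as the composition, over $\alpha_j \in A_{\mathrm{enc}}$, of the excitation $p_j$ traversing the path $\gamma_j^{(k)}$. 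By the defining TQFT-style property of conventional topological stabiliser codes recalled above, each $P^{(k)}$ is a Pauli representative of the same logical operator $\bar{P}$.

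The multiplicity bound is then immediate. A qubit $q$ lies in $\mathrm{supp}(P^{(k)})$ only if $q \in \mathrm{supp}(\gamma_j^{(k)})$ for some $\alpha_j \in A_{\mathrm{enc}}$; since the paths within a single class are pairwise disjoint, for each $\alpha_j$ there is at most one index $k$ for which this occurs. Summing over encoding classes, $q$ appears in at most $|A_{\mathrm{enc}}| \leq |A|$ of the $d$ representatives, so the family $\{P^{(k)}\}$ is $|A|$-disjoint in the sense of Appendix \ref{AppendixB}. This yields $\Delta'_{|A|} \geq d$ and hence $\Delta \geq \Delta_{|A|} \geq d/|A|$. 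Taking $l \to \infty$, with $|A|$ constant and $d \to \infty$, gives $\lim_{l \to \infty} \Delta(l) = \infty$, as required.

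The main obstacle I anticipate is justifying the central assertion that \emph{independent} choices of path representatives across different encoding classes assemble into a single Pauli representative of the same logical operator. This is precisely the content of the TQFT-style definition sketched in the excerpt -- topological excitations propagate freely via stabiliser multiplication and distinct topological sectors compose coherently -- but since that definition is stated informally, careful phrasing is the non-routine part. A secondary subtlety is simply to truncate to $d$ paths per class whenever Lemma \ref{LemTSC} affords more, so that the representatives across different classes share a common index set $k \in \{1,\ldots,d\}$.
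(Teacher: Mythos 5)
Your proposal is correct and follows essentially the same route as the paper's proof: invoke Lemma \ref{LemTSC} to obtain $d$ disjoint representative paths per encoding generating path class, assemble $d$ representatives of each logical Pauli operator by choosing one path per class, and conclude that each qubit lies in the support of at most a constant (the number of encoding classes) of them, giving $\Delta \geq d/C \to \infty$. The subtlety you flag -- that independent path choices across classes compose into valid representatives of the same logical operator -- is exactly the TQFT-structure assertion the paper also relies on without further elaboration.
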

\begin{proof}
Let $\bar{P}$ be a logical Pauli operator. Representatives of $\bar{P}$ correspond to a choice of representative paths from each encoding generating path class, $\alpha_j$. By Lemma \ref{LemTSC}, we can choose $d$ representatives of $\bar{P}$ such that the representatives from each $\alpha_j$ are disjoint. Thus, the number of these representatives of which any qubit is in the support is at most the number of encoding generating path classes which by assumption is a constant in $l$, $C$. Thus, the disjointness is bounded by $\Delta=\frac{d}{C}$ and so $\lim_{l\to\infty} \Delta=\infty$.
\end{proof}

\section{Proof that ELNP Unitary Quantum Channels Cannot Be Universal }\label{AppendixA}
In this section, we present a complete proof of the fact, stated in Sec.~\ref{Sec:IVAA}, that unitary ELNP quantum channels cannot allow circumvention of Corollary \ref{Cor1}. This justifies the focus on non-unitary quantum channels, which are not constrained by this fact.

\begin{lemma}\label{LemA1}
For a unitary quantum channel, $\mathcal{U}$, there exists a single-qubit Pauli error, $P_j$ such that $s_{\mathcal{U}}=s_{\mathcal{U}}(P_j)$.
\end{lemma}
\begin{proof}
By definition of spread, there exists an error, $P\in\mathbf{P}_n$ such that $s_{\mathcal{U}}=\frac{\left|\text{supp}\left(UPU^\dag\right)\right|}{|\text{supp}\left(P\right)|}$. $P$ can be decomposed into a product of single-qubit Pauli errors, $P=\prod_{i=1}^{|\text{supp}(P)|}P_i$. Using this decomposition,
\begin{align}
s_{\mathcal{U}} &=\frac{|\text{supp}\left(UPU^\dag\right)|}{|\text{supp}(P)|} \\
&= \frac{\left|\text{supp}\left(U\left(\prod_{i=1}^{|\text{supp}(P)|}P_i\right)U^\dag\right)\right|}{|\text{supp}(P)|}\\
&=\frac{\left|\text{supp}\left(\prod_{i=1}^{|\text{supp}(P)|} UP_iU^\dag\right)\right|}{|\text{supp}(P)|} \label{EqHom}\\
&\leq \frac{\sum_{i=1}^{|\text{supp}(P)|} |\text{supp}\left(UP_iU^\dag\right)|}{|\text{supp}(P)|}\\
&\leq \frac{|\text{supp}(P)| \max_{1\leq i \leq |\text{supp}(P)|}  |\text{supp}\left(UP_iU^\dag\right)|}{|\text{supp}(P)|}\\
&= \max_{1\leq i \leq |\text{supp}(P)|}  |\text{supp}\left(UP_iU^\dag\right)|\\
&=  \max_{1\leq i \leq |\text{supp}(P)|}  s_{\mathcal{U}}(P_i)
\end{align}
Thus, there exists a single qubit error, $P_j$ such that $s_{\mathcal{U}} \leq s_{\mathcal{U}}(P_j)$. By definition of spread, this implies that $s_{\mathcal{U}} = s_{\mathcal{U}}(P_j)$.
\end{proof}

\begin{theorem}\label{ThA}
A unitary quantum channel that is ELNP necessarily has bounded spread.
\end{theorem}
\begin{proof}
For any single-qubit Pauli error, $P_i$, there exists a local stochastic noise model such that $p(P_i)=\eta$ and $p(I)=1-\eta$, where $\eta\neq 0$ is the noise rate. If that noise model is applied at the first timestep, and the noise is trivial at subsequent timesteps, then the effective noise model, $\mathcal{E}_{\mathcal{U}}$, has a probability of $\mathcal{U}(P_i)$ given by $p_{\mathcal{U}}(\mathcal{U}(P_i))=p(P_i)=\eta$. Thus, $\mathcal{E}_{\mathcal{U}}$ can only be essentially local if $\mathcal{U}(P_i)$ has a support size bounded in the code size, for any single-qubit Pauli operator $P_i$.  By Lemma \ref{LemA1}, this implies that $\mathcal{U}$ has bounded spread.
\end{proof}
The following corollary follows immmediately from Theorem \ref{Th2} and Theorem \ref{ThA}:
\begin{corollary}\label{Cor2}
A universal set of logical operators cannot be implemented by unitary ELNP quantum channels.
\end{corollary}

\section{Examples of Fault-Tolerant, Unbounded-Spread Logical Operator Implementations}\label{AppendixD}
In this appendix, we provide further details on approaches to universal fault-tolerant quantum computing. Specifically, we first consider code switching, both as an intrinsically fault-tolerant process and as a gradational process. We then consider pieceable fault tolerance with delayed error correction.

\subsection{Code Switching}\label{AppendixD1}
Code switching is a useful technique between two codes that have sets of logical operators admitting bounded-spread implementations such that the union is a universal gate set. We have examples of such codes for both concatenated and topological code families. In particular, the Steane code admits transversal implementations of the full Clifford group, while the Reed-Muller code admits a transversal implementation of a non-Clifford ($\bar{T}$) logical operator. (See Table \ref{Table2} for the definitions of these codes.) These codes can be concatenated to construct code families -- the concatenated Steane code and concatenated Reed-Muller code respectively. Alternatively, the codes can be generalised to (conventional) topological stabiliser code families -- the Steane and Reed-Muller codes are the smallest instances of the two-dimensional triangular and three-dimensional tetrahedral colour codes respectively. In both cases, the code families inherit the transversal gates of the original codes. Code switching between the Steane and Reed-Muller codes can thus be used as a basis for realising a universal gate set with these code families.

\begin{table}
\resizebox{0.48\textwidth}{!}{
  \begin{tabular}{| c | c | c |}
    \hline
     & \textbf{Steane Code} & \textbf{Reed-Muller Code} \\ \hline
    \textbf{Stabiliser} & $X_1X_2X_3X_4$  & $X_1X_2X_3X_4X_8X_9X_{10}X_{11}$ \\ 
\textbf{Generators} & $X_2X_3X_5X_6$ & $X_2X_3X_5X_6X_9X_{10}X_{12}X_{13}$ 
\\
& $X_3X_4X_5X_7$ & $X_3X_4X_5X_7X_{10}X_{11}X_{12}X_{14}$  
\\
&  & $X_8X_9X_{10}X_{11}X_{12}X_{13}X_{14}X_{15}$
\\
& $Z_1Z_2Z_3Z_4$ & $Z_1Z_2Z_3Z_4$\\
& $Z_2Z_3Z_5Z_6$ & $Z_2Z_3Z_5Z_6$\\
& $Z_3Z_4Z_5Z_7$ & $Z_3Z_4Z_5Z_7$\\
&  & $Z_8Z_9Z_{10}Z_{11}$ \\
& & $Z_9Z_{10}Z_{12}Z_{13}$ \\
& & $Z_{10}Z_{11}Z_{13}Z_{14}$ \\
& & $Z_{12}Z_{13}Z_{14}Z_{15}$\\
& & $Z_1Z_4Z_8Z_{11}$\\
& & $Z_2Z_5Z_9Z_{12}$\\
& & $Z_6Z_7Z_{13}Z_{14}$\\
\hline
    \textbf{Logical} & $X_1X_2X_3X_4X_5X_6X_7$ & $X_1X_2X_3X_4X_5X_6X_7$ \\
\textbf{Operators} & $Z_1Z_2Z_3Z_4Z_5Z_6Z_7$ & $Z_1Z_2Z_3Z_4Z_5Z_6Z_7$\\
\hline
  \end{tabular}}
\caption{Stabiliser generators and logical operators of the Steane and Reed-Muller codes. \label{Table2}}
\end{table}

This code switching can be done fault-tolerantly by using the circuits in Fig.~\ref{Fig:Tel1}. Specifically, there is a transversal implementation of $\overline{\text{CNOT}}$ with the control on the Reed-Muller (or concatenated Reed-Muller or 3D colour) code and target on the Steane (or concatenated Steane or 2D colour) code. Combined with standard techniques for preparing stabiliser states and performing fault-tolerant logical Pauli measurements in CSS codes, this allows for a logical state of the Reed-Muller (or concatenated Reed-Muller or 3D colour) code to be mapped to the same logical state of the Steane (or concatenated Steane or 2D colour) code by using the circuit in Fig.~\ref{Fig:Tel1a} with the first wire being the former code and the second code the latter. The inverse mapping can similarly be implemented using the circuit in Fig.~\ref{Fig:Tel1b}.

We emphasise that while codes based on Steane and Reed-Muller codes serve as useful examples, code switching can be used more generally to achieve a universal fault-tolerant gate set. Of particular importance for its relationship to other schemes discussed below, we note that the 2D and 3D surface codes similarly admit a tranversal implementation of $\overline{\text{CNOT}}$ between them, and so switching can similarly be performed between them. The 2D surface code admits bounded-spread implementations of the logical Clifford group, while the 3D code admits a bounded-spread implementation of a non-Clifford ($\overline{\text{CCZ}}$) logical operator \cite{KubicaUnfolding,WebsterLPLO,Vasmer}, and so a universal set of fault-tolerant logical operator implementations can be realised by code switching between these codes.

Our notion of code switching is fundamentally similar to a number of important schemes for fault tolerant universal sets of logical operator implementations. { Indeed, we note that it is closely related to that described in Ref.~\cite{Beverland2}. More generally, } Ref.~\cite{Anderson} presents a technique for switching between Steane and Reed-Muller codes by making appropriate stabiliser measurements of the code being switched into, and applying appropriate corrections conditionally on the measurement outcomes. This scheme can be understood to differ from the code switching technique we have described only in that the physical qubits of the Steane code are considered to be a subset of those of the Reed-Muller code. Due to the related structures of the two codes, this allows for operations that act only on these qubits (specificially, the transversal $\overline{\text{CNOT}}$ and measurements on these qubits) to be omitted, producing a potentially more efficient scheme. This technique can be generalised to concatenated Steane and concatenated Reed-Muller codes by performing the same operations at all levels of concatenation. While this scheme is physically different from our code switching technique, we emphasise that it retains the essential spirit. Specifically, the measurements performed during switching and corresponding corrections can be understood to introduce non-unitarity into the logical operator implementation, by allowing for correctable errors to be detected and corrected during the implementation. Since these corrections can be necessary even in the absence of noise, and can only be corrected based on measurements made during the logical operator implementation, this process differs from standard error correction, and thus highlights the implementation's intrinsically fault-tolerant nature. As described in Ref.~\cite{Anderson}, the scheme can also be related to using gauge fixing of a subsystem code to realise the Steane and Reed-Muller codes as required \cite{Paetznik}. This process of gauge fixing similarly is an intrinsically fault-tolerant process that uses measurement and adaptive corrections to circumvent Corollary \ref{Cor1}.

Dimensional jumping, which switches between topological stabiliser codes of different dimensions, can similarly be understood as a variation on our code switching scheme. Indeed, switching between 3D and 2D colour codes in the way described in Ref.~\cite{Bombin5} is similar to the scheme of Ref.~\cite{Anderson}, in that it corresponds to the case where the physical qubits of the 2D code are a subset of those of the 3D code, which again allows some operations to be omitted without fundamentally altering the approach used to circumvent Corollary \ref{Cor1}. Similar to above, this procedure can also be equivalently understood in terms of fixing the gauge of a subsystem code (the gauge colour code) \cite{Bombin}.

\subsection{Code Conversion}\label{AppendixD2}
While code switching can be implemented intrinsically fault-tolerantly, it is also possible to modify it into a gradational process -- code conversion. Code conversion has the advantage that all measurements and adaptively applied operators correspond to standard error-correction on the intermediate codes. This means that, in the case where decoders can be used on the intermediate codes that are analogous to those of the codes being switched between, standard error-correction and unitary implementations of logical operators can be sufficient for switching and hence for a universal gate set. Indeed, in the case where all codes involved are self-correcting, it allows for a universal gate set without requiring any measurements and adaptive gates.

As a first example, note that it is straightforward to construct a code conversion scheme for concatentated codes, such as the concatenated Steane and Reed-Muller codes. To do so, we first construct a unitary circuit to switch between Steane and Reed-Muller codes. With qubits numbered as in Table \ref{Table2}, such a circuit for switching from the Steane code to the Reed-Muller code is $\prod_{i=1,5,7}\text{CNOT}_{i+7,15}\prod_{i=1}^7 \text{CNOT}_{i,i+7}$, where controls are on the Steane code and targets on the Reed-Muller code. The inverse of this circuit, $\prod_{i=1}^7 \text{CNOT}_{i,i+7}\prod_{i=1,5,7}\text{CNOT}_{i+7,15}$, switches from the Reed-Muller code to the Steane code. Switching between the concatenated codes can then be performed by applying this circuit at one level of concatenation at a time. Since the CNOT operators are transversal at lower levels of concatenation, the spread of this circuit applied at a single level of concatenation does not scale with $l$ and so remains bounded. Indeed, such a circuit applied at a single level of concatenation is fault-tolerant. 

It is straightforward to perform error correction at each intermediate step, since the intermediate codes remain concatenated codes and so can be decoded one level at a time. Thus, a code conversion can be performed, $\mathcal{I}=\prod_{i=1}^l \mathcal{D}_i \mathcal{U}_i$, where $l$ is the number of levels of concatenation, each $\mathcal{D}_i$ is a decoder for the appropriate concatenated code and $\mathcal{U}_i$ is the unitary quantum channel performing the necessary code switch at the $i$th level of concatenation. This allows the universal set of logical operator implementations to be achieved, analogously to for intrinsically fault-tolerant code switching, using gradational logical operator implementations. We note that this scheme is related to other proposed code switching schemes \cite{Hwang,Colladay}, but differs in that we use unitary quantum channels instead of measurements to perform switching.

\subsubsection{Dimensional Conversion}
It is also possible to perform code switching between different topological stabiliser codes -- dimensional jumping -- gradationalally. In particular, we consider the example of code conversion between (planar) surface codes in different dimensions, called dimensional conversion.

Specifically, we consider a three-dimensional surface code on an $l\times l\times l$ cubic lattice and a two-dimensional surface code on an $l\times l$ square lattice. These two codes together have a universal set of bounded-spread logical operator implementations (Clifford + $\overline{\text{CCZ}}$) and are connected by a sequence of three-dimensional surface codes on asymmetric $l\times l \times a$ lattices, for $1\leq a< l$ (as shown in Fig.~\ref{Fig:ADJFull}). Each of these intermediate codes can be error-corrected using any valid three-dimensional surface code decoder. Thus, code conversion between these codes can be performed if unitary, bounded-spread implementations for switching between codes on $l\times l \times a$ and $l\times l \times (a+1)$ lattices can be found (for $0\leq a<l$). 

\begin{figure*}
\includegraphics{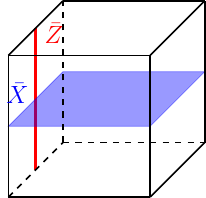}
\includegraphics{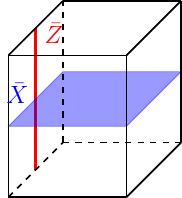}
\includegraphics{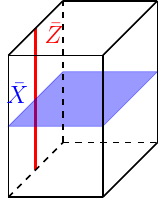}
\includegraphics{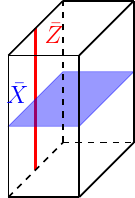}
\includegraphics{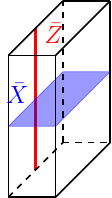}
\includegraphics{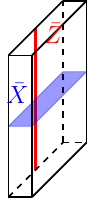}
\includegraphics{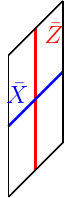}
\caption{Sequence of surface codes through which dimensional conversion deforms to transform between the standard three-dimensional ($l\times l \times l$) surface code (far left) and two-dimensional ($l\times l$) surface code (far right). \label{Fig:ADJFull}} 
\end{figure*}

We can indeed perform such switches. Specifically, a 2D layer can be added to produce an $(l\times l\times (a+1))$ surface code by using a locality-preserving, unitary circuit to appropriately entangle the additional layer to the boundary of an $(l\times l \times a)$ surface code. Indeed, this can be done using a product of nearest-neighbour CNOT operators, as described in Sec.~IX of Ref.~\cite{Higgott}. A layer can be removed by performing the inverse circuit between the boundary that is to be removed and the adjacent 2D layer of the code (see Fig.~\ref{Fig:ADJ}). Since both switches are {locality-preserving}, they necessarily have bounded-spread as required. We note that successively adding layers gives a unitary circuit to encode a 3D surface code from 2D slices, similar to Ref.~\cite{Higgott}, but with the addition of intermediate error-correction to ensure fault tolerance.

The process of dimensional conversion can also be applied between more general pairs of surface codes by an analogous process. In particular, we note that dimensional conversion between the symmetric (dim($\bar{X}$)=dim($\bar{Z}$)=2) 4D surface code and an asymmetric (dim($\bar{X}$)=4, dim($\bar{Z}$)=2) 6D surface code can be achieved and is theoretically interesting. Indeed, it can be performed by adding/removing 4D surface code layers to switch between a 4D and 5D code, combined with adding/removing 5D surface code layers to switch between 5D and 6D. 

The significance of this case is that the union of the bounded-spread logical operator implementations of these codes is universal  (Clifford for 4D + $\overline{\text{CCZ}}$ for 6D) \cite{WebsterLPLO} but both codes, and all intermediate codes, are self-correcting (as they all have logical Pauli operators supported on manifolds of at least two dimensions) \cite{Dennis,Brown4}. A self-correcting code has a logical space that is the ground space of a local Hamiltonian with the property that high weight errors correspond to high energy states and so are naturally suppressed. This means that the active error-correction applied after adding or removing each layer can be replaced by passive errorr-correction. This passive error correction can be performed by dissipation of energy into the environment, without requiring measurements or classical processing for adaptive corrections. Thus, it shows that none of adaptivity, classical processing nor even measurements are necessary to circumvent Corollary \ref{Cor1} -- unitary channels supplemented by non-unitary dissipation is sufficient. This is noteworthy since all other universal schemes we have seen depend on these ingredients in some way, either explicitly in intrinsically fault-tolerant implementations, or implicilty by active error-correction. Hence, this example supports the reasoning in the Sec.~\ref{Sec:IVAA} that it is differential treatment of logical operators and correctable errors made possible by non-unitary quantum channels that makes universality possible, not any of these more specific requirements that are common in universal schemes.

\begin{figure*}\label{Fig:ADJ}
\includegraphics{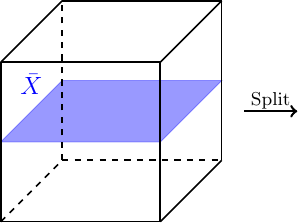}
\includegraphics{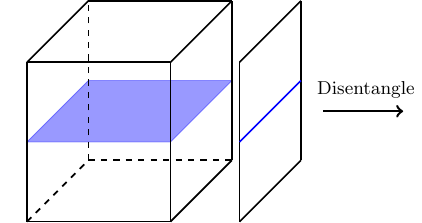}
\includegraphics{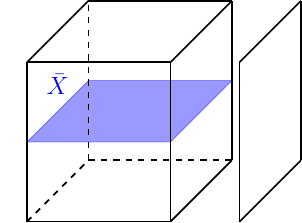}
\includegraphics{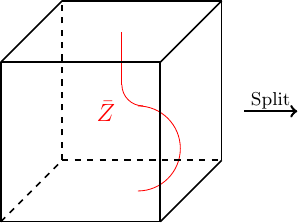}
\includegraphics{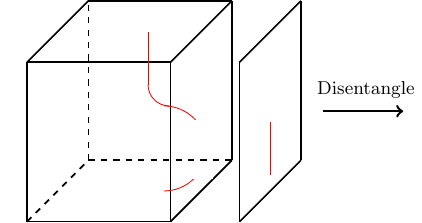}
\includegraphics{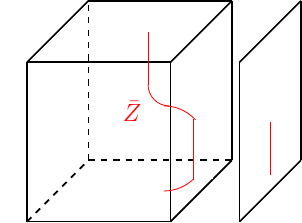}
\caption{Representation of action on logical operators of removing a layer during dimensional conversion between 2D and 3D surface codes. A boundary is viewed as a 2D surface code (split) and then a locality-preserving logical circuit that is a product of nearest-neighbour CNOT operators is applied between it and the face of the 3D code adjacent to it to disentangle it from the 3D code. The final result is that the original logical operators are mapped to the corresponding logical operators on the new code with the layer removed. (The removed layer is discarded, so its state is not important.) A layer can be added by the inverse process. \label{Fig:ADJ}}
\end{figure*}

\subsection{Pieceable Fault Tolerance with Delayed Error Correction}\label{AppendixD3}
Code conversion uses code deformation to fault-tolerantly (and non-unitarily) switch between two codes that together admit a universal set of bounded-spread logical operator implementations.  The gradational technique of pieceable fault tolerance can also allow direct implementation of an unbounded-spread logical operator implementation on a single code \cite{Knill2,Yoder2}. Indeed, the scheme of Ref.~\cite{Brown} for implementing a non-Clifford logical operator on the two-dimensional surface code can be understood as taking such an approach, supplemented with the additional technique of delayed error correction to address the problem of uncorrected errors accumulating over time. We here describe a perspective on the scheme, focussing on elements salient to this understanding. For a complete presentation of the scheme, we refer the reader to Ref.~\cite{Scruby2}.

Specifically, Ref.~\cite{Brown} implements an unbounded-spread $\overline{\text{CCZ}}$ logical operator on the two-dimensional surface code. As with all logical operators there exists a unitary circuit that implements this logical operator, but this circuit requires entangling each physical qubit non-locally with a large number of other physical qubits, allowing for errors to spread over these qubits. This problem is addressed in the way discussed in Sec.~\ref{Sec:IVA1} -- the circuit is decomposed into an extensive number of bounded-spread (in this case transversal) steps and an operation performed between each step to allow for error-correction in order to give a gradational implementation.

In principle, this gradational implementation could be performed by standard code deformation. At intermediate times, the encoded state is not a logical state of the two-dimensional surface code, but will be a logical state of some more general code on the physical qubits (i.e.~that with logical space stabilised by the image of the surface code stabilisers under the action of the physical CCZ operators performed up to that point). Error correction at intermediate times could then be performed by implementing any appropriate decoders on these intermediate codes. However, while theoretically possible, this is practically difficult since the intermediate code does not retain important structural elements that make decoding the surface code feasible. Specifically, the ``stabilisers'' of the intermediate code are non-Pauli operators (since the physical operators applied are non-Clifford), meaning that these codes are not true stabiliser codes. Moreover, these ``stabilisers'' are non-local and high weight (since the physical operators are non-local, entangling operators), meaning that the intermediate codes are not topological or even LDPC. Thus, syndrome extraction for standard decoding of the intermediate codes would require high-weight, non-local, non-Pauli measurements, which are unlikely to be practical.

The scheme avoids this problem by taking a more nuanced approach based on pieceable fault tolerance to prevent uncontrolled arising of errors without standard decoding on intermediate codes. To clarify how this is done, it is helpful to consider $X$ and $Z$ type errors separately, which can be done because the surface code is CSS.

To address $X$ errors, we note that the $Z$-type stabilisers (which detect such errors) commute with all the physical CCZ operators. This means that the $Z$-type stabilisers of all intermediate codes during the code deformation are the same as in the two-dimensional surface code. Thus, $X$ errors can in principle be corrected by use of a standard surface code decoder. It is important that this correction is performed because uncorrected $X$ errors would be spread by subsequent CCZ operators, and thus become high-weight, uncorrectable errors. In practice, it is important to the scheme that this correction is done in constant time, which requires use of just-in-time decoding \cite{Bombin4} -- a technique that allows for the single-shot error-correction of $X$ errors in the three-dimensional surface code to be leveraged in the context of this (2+1)D scheme. However, the fundamental approach remains to correct $X$ errors by measuring $Z$ stabilisers and inferring syndromes, as with standard decoding.

To address $Z$ errors, the important observation is that $Z$ errors commute with all the physical CCZ operators and so are not spread by them. This means that correction of such errors can be delayed without concern of spreading. Indeed, this is also used in Ref.~\cite{Yoder2}, where such errors are indeed left unaddressed until after the deformation is complete and a standard decoder can again be used on the original code. However, this delaying also allows for new $Z$ errors to arise unhindered throughout the duration of the code deformation. Since this duration scales polynomially with the code size, $l$, increasing the code size cannot be assumed to provide improved protection against such errors. Thus, entirely delaying dealing with $Z$ errors will not ensure maintenance of a threshold and hence will not be fault-tolerant by our definition.

In order to address this issue, Ref.~\cite{Brown} extends upon pieceable fault tolerance by using delayed error correction. Specifically, syndrome information about $Z$ errors is extracted at each step of the deformation by qubitwise $X$ measurements. In particular, a teleportation circuit is applied at the level of physical qubits. (This process involves entangling two layers of 2D surface codes, and so is interpreted in \cite{Brown} as creating an $(l\times l \times 2)$ 3D surface code.) This teleportation circuit can be understood as similar to teleportation error-correction (as described in Sec.~\ref{Sec:IVB}) but performed on the level of physical qubits instead of at the logical level because the logical $X$ operator of the intermediate codes cannot practically be measured. The $X$ measurements performed in this circuit yield information that can be used to reconstruct the state of $X$ stabilisers of a three-dimensional surface code equvialent to the (2+1)D surface code realised during code deformation. (Repeated measurements are not necessary because the measurements are qubitwise instead of being of whole stabilisers, and so measurement errors can be treated equivalently to physical errors \cite{Brown}.) Since $Z$ errors of a 3D surface code cannot be corrected locally, correction of the errors implied by this syndrome {cannot be completed} until the end of the code deformation process when all measurements have been made. Nevertheless, the syndrome of $Z$ errors on the final 2D code can be inferred from this 3D code syndrome and so $Z$ errors can be appropriately corrected. The $Z$ errors of the 3D code correspond to all $Z$ errors occuring throughout the deformation and so this correction ensures fault tolerance.

In summary, the scheme of Ref.~\cite{Brown} may thus be understood as a gradational logical operator implementation of the form $\mathcal{U}=\prod_{i=1}^t \mathcal{K}_i\mathcal{U}_i$. Here, each $\mathcal{U}_i$ is a transversal implementation of physical CCZ gates such that $\prod_i \mathcal{U}_i=\mathcal{U}$. Each $\mathcal{K}_i$ for $i<t$ can be expressed as $\mathcal{K}_i=\mathcal{D}^{(x)} \mathcal{T}$, where $\mathcal{D}^{(x)}$ is a (just-in-time) decoder that corrects $X$ errors and $\mathcal{T}$ is a physical teleportation circuit that extracts syndrome information about $Z$ errors. The final channel $\mathcal{K}_t$ also includes correction of $Z$ errors based on the collected syndrome information, as described in the previous paragraph. Thus, the scheme can be understood as a form of gradational logical operator implementation.

\section{ Non-ELNP Fault-Tolerant Quantum Channels}\label{AppendixE}
Throughout this paper, we have { focussed on quantum channels whose fault tolerance is ensured by standard threshold theorems -- first LNP channels, and then more generally ELNP channels. This approach allows us to characterise general classes of quantum channels as fault-tolerant without requiring the proof of tailored threshold theorems for each channel, or requiring the use of tailored decoders.} It follows the lead of most previous approaches to fault tolerance and so is appropriate for contextualising most of these approaches, as is evident from Sec.~\ref{Sec:IV} and Appendix \ref{AppendixD}.

{ Nevertheless, an alternative approach is possible and can allow for new approaches to implementing a universal set of logical operators. Specifically, recall from Sec.~\ref{Sec:IIA} that a quantum channel, $\mathcal{A}$ is fault-tolerant for a particular stochastic noise model, $\mathcal{E}$, provided that the code family has a threshold for the effective noise channel, $\mathcal{E}_{\mathcal{A}}$, i.e.~there exists a decoder $\mathcal{D}$ and threshold $\eta_{\mathcal{A}}>0$ such that
\begin{equation}\label{EqFT2F}
  \lim_{l\to\infty} \left\|\mathcal{D}\circ\mathcal{E}_{\mathcal{A}_\eta}-\mathcal{I}\right\|_{\diamond,\mathcal{L}}=0\,, \quad \text{for}\ \eta<\eta_{\mathcal{A}}\,.
\end{equation}
In particular settings, it can be possible to prove that this is true even if $\mathcal{E}_{\mathcal{A}}$ is not a local stochastic noise model. This can allow $\mathcal{A}$ to be fault-tolerant even if it is not ELNP. In particular, this could allow for a universal set of logical operators implemented by unitary, fault-tolerant quantum channels.} This possibility appears to have received relatively little attention to date; while tailored decoders have been used within broader schemes to address specific problems (such as just-in-time decoding \cite{Bombin4,Brown}), they have rarely been used as the single technique to circumvent no-go theorems on fault-tolerant gate sets. Thus, further exploration could be fruitful.

However, we note one prominent example of this approach that has been proposed by Jochym O'Connor and Laflamme \cite{JochymOConnor2}. Specifically, this scheme is based on alternated concatenation of Steane and Reed-Muller codes (for a total of $\frac{l}{2}$ of each type of code). A logical $\bar{T}$ operator is transversal on the Reed-Muller code, but the standard unitary circuit has a spread of three on the Steane code. This means that the unitary implementation of $\bar{T}$ on the concatenated code by applying the appropriate implementation at each level of concatenation has a spread of $3^{\frac{l}{2}}$ since a single-qubit error can be spread by a factor of $3$ for each layer of Steane code. Similarly, $\bar{H}$ is transversal on Steane codes but not Reed-Muller codes, and so similarly has spread that scales with the number of layers of Reed-Muller codes, $\frac{l}{2}$. Hence, the implementations have unbounded-spread and so (by Theorem \ref{Th1}) cannot be LNP. Indeed, since they are unitary, it also follows that they cannot be ELNP (as shown in Appendix \ref{AppendixA}).

Nevertheless, these unbounded-spread, unitary implementations of logical operators are fault-tolerant (i.e.~have a threshold) with standard concatenated code decoders and a depolarising noise model \cite{Chamberland}. This is possible because large errors that arise due to spreading on the layers on which a logical operator implementation is not transversal can be corrected using the other type of layer, with respect to which errors are not spread such that uncorrectable errors do not arise as a result of the implementation. Thus, if $\mathcal{E}$ is a depolarising channel and $\mathcal{H}$ and $\mathcal{T}$ are implementations of $\bar{H}$ and $\bar{T}$ respectively as described then $\mathcal{E}_{\mathcal{H}}$ and $\mathcal{E}_{\mathcal{T}}$ are noise models that are non-local, but which have specific structure that arises from the deliberate construction of the code family and channels that nonetheless allows for a threshold.

We note one illuminating nuance of the scheme that prevents it from allowing for arbitary quantum gates to be implemented fault-tolerantly by unitary channels. Specifically, $\mathcal{H}$ is not fault-tolerant for noise model $\mathcal{E}_{\mathcal{T}}$ and $\mathcal{T}$ is not fault-tolerant for noise model $\mathcal{E}_{\mathcal{H}}$, since the resulting noise models ($\mathcal{E}_{\mathcal{H}\mathcal{T}}$ and $\mathcal{E}_{\mathcal{T}\mathcal{H}})$ have non-vanishing probability of uncorrectable errors affecting both types of codes in the concatenation. This means that composite logical operators, $\bar{H}\bar{T}$ and $\bar{T}\bar{H}$ cannot be implemented fault-tolerantly by unitary channels; intermediate error correction is required between the implementations of $\bar{H}$ and $\bar{T}$. Thus, while a universal gate set (i.e.~generators of a group dense in the group of all quantum gates) can be implemented fault-tolerantly by unitary channels, the composition of such gates to realise arbitary operations requires non-unitary channels. By contrast, since LNP channels map all local stochastic noise models to local stochastic noise models, any sequence of them of length constant in the code size can be applied while retaining fault tolerance.

In summary, while the scheme of Ref.~\cite{JochymOConnor2} is ostensibly similar to intrinsically fault-tolerant schemes using Steane and Reed-Muller codes \cite{Paetznik,Anderson}, it actually uses a fundamentally different technique to leverage the properties of these codes from code switching or gauge fixing. Specifically, this technique is based on designing a code family and quantum channels specifically to ensure fault tolerance, even without preserving the locality of noise (even up to errors of probability zero) In addition to concatenation, an alternative version of this approach based on the homological product of codes has also been proposed \cite{JochymOConnor4}. It may be fruitful to further pursue analogous approaches on more general code families.

\end{document}